\setlist{nolistsep}
\newtheorem{rem}{Remark}[section]
\newcommand{\R}{\mathbb{R}}
\newcommand{\levy}{{L\'evy}}
\newcommand{\var}{{\textup{var}}}
\begin{document}

\title{First Passage Time for Tempered Stable Process and Its Application to Perpetual American Option and Barrier Option Pricing}

\titlerunning{First Passage Time for TS Process \& Its Application}        

\author{Young Shin Kim}


\institute{Young Shin Kim \at
              College of Business, Stony Brook University, New York, USA\\
              \email{aaron.kim@stonybrook.edu}  
}

\date{Received: Sep 22 / Accepted: date}

\maketitle

\begin{abstract}
In this paper, we will discuss an approximation of the characteristic function of the first passage time for a \levy~process using the martingale approach. The characteristic function of the first passage time of the tempered stable process is provided explicitly or by an indirect numerical method. This will be applied to the perpetual American option pricing and the barrier option pricing. Numerical illustrations are provided for the calibrated parameters using the market call and put prices.
\keywords{
\levy~process \and
tempered stable process \and
first passage time \and
barrier option pricing \and
perpetual American option pricing
}
JEL Classification : G13, C21, C42
\end{abstract}

\section{Introduction}
Since \cite{BlackScholes:1973} have introduced the no-arbitrage option pricing model and formula, the Black-Scholes (BS) model became the most popular model in finance. Moreover, pricing path dependent exotic options, including American perpetual option and barrier options, is an important topic in finance. 
After the Black-Monday stock market crash in 1987, the volatility smile effect in option market have been observed and many scientists have introduced many advanced models to describe the smile effect. The \levy~process option pricing models (or simply \levy~models), based on tempered stable (TS) processes, including normal tempered stable (NTS) and CGMY processes, are popular models to explain the volatility smile effect in European call and put prices (See \cite{BarndorffNielsenLevendorskii:2001}, \cite{BarndorffNielsenShephard:2001}, \cite{CGMY:2002}, \cite{Boyarchenko_Levendorskii:2000}, \cite{Koponen:1995}, and \cite{RachevKimBianchiFabozzi:2011a}).

The distribution of the first passage time for the arithmetic Brownian motion is an essential topic to calculate those path dependent option prices. We have the distribution of the first passage time of Brownian motion in literature including \cite{Barndorff-Nielsen401}.
Sequentially the distribution of first passage time on \levy~process has been studied by \cite{hurd2009}, \cite{rogers2000} and others.

The perpetual American option and barrier option pricing also not only studied on BS model but also studied on \levy~models. \cite{GerberShiu:1994} discussed the perpetual American option pricing formula on the BS model with the martingale approach, and \cite{Boyarchenko_Levendorskii:2002a} found perpetual American option pricing formula using the Wiener–Hopf factorization on the \levy~model. The barrier option price formula under the BS model is provided in literature including \cite{Hull:2015}.
Additionally, \cite{Boyarchenko_Levendorskii:2002b} presented the barrier option pricing method on \levy~model.
The partial integro-differential equation method has been very popularly used for barrier option pricing (See \cite{math4010002} and \cite{Cont_Tankov:2004}) on \levy~model.
Recently, \cite{2014arXiv1403.1816B} discussed barrier option pricing method using A transform.

In this paper, we will discuss the characteristic function of the first passage time of a subclass of \levy~process containing Brownian Motion and TS processes (NTS or CGMY process) using the martingale approach. Since the martingale approach does not works for the process with jumps, we will use a continuous approximation of the \levy~process. After then we find an approximation form of the characteristic function for the first passage time of \levy~process. In some special cases, we will see the closed form of characteristic function. If the closed form solution is not allowed then the numerical method can be used to find it. The characteristic function of the first passage time will be applied to find perpetual American option prices and barrier option prices. The numerical methods and performance for pricing perpetual American option and barrier option will be discussed with empirical market data.

The remainder of this paper is organized as follows. The characteristic function of the first passage time for some \levy~process using the martingale method is deduced in Section 2.  The approximation case for the \levy~process with jumps also discussed in the section.
The perpetual American option pricing and the barrier option pricing is discussed in Section 3, together with numerical illustrations. Section 3 summarizes the main findings. In the appendix, we explain how pricing formulas of perpetual American option and barrier call and put options are obtained.

\section{Characteristic function of the first passage time}\label{section2}
Let $X = (X(t))_{t\ge0}$ be a \levy~process.
Suppose that $\phi_{X(t)}$ is the characteristic function (ch.F) of $X(t)$ and $\psi_{X}$ is the \levy~symbol of $X$ that is given by $\psi_{X}(u) = \log\phi_{X(1)}(u)$ so that $\phi_{X(t)}(u)=e^{t\psi_{X}(u)}$ (see \cite{Applebaum:2004}). Let $l\in\R$ be a level.
We define a first passage time $\tau(l)$ for the \levy~process $X$ to touch the level $l$ as follows:
\begin{equation}\label{eq:tau(l)}
\tau(l) = \begin{cases} \inf \{t\ge0 | X(t)\le l \} & \text{ if } l<0 \\
 \inf \{t\ge0 | X(t)\ge l \} & \text{ if } l>0 
\end{cases}.
\end{equation}
\begin{lemma}\label{Lemma:ChfTau}
Suppose $X$ is a continuous \levy~process.
For all $u\in\R$, if there exist a complex function $\eta(u)$ such that $\Re\left(-l\eta(u)\right)<0$, $\phi_{X(1)}(-i\eta(u))$ is well defined, and
\begin{equation}\label{eq:martingalecondition2}
iu+\psi_{X}(-i\eta(u))=0 
\end{equation}
then the ch.F of $\tau(l)$ equals to
\begin{equation}
\label{eq:chf_tau}
\phi_{\tau(l)}(u) = E\left[e^{iu \tau(l)} \right]=e^{-l\eta(u)}.
\end{equation}
\end{lemma}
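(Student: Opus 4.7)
The plan is a classical exponential-martingale argument combined with the optional stopping theorem, in which pathwise continuity of $X$ is used to eliminate the overshoot at the first-passage time.

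First, I would identify the natural candidate martingale. Motivated by the form of the hypothesis \eqref{eq:martingalecondition2}, consider
\[
M(t) = \exp\bigl(iut + \eta(u)\,X(t)\bigr), \qquad t\ge 0,
\]
so that $M(0)=1$. Using the independent and stationary increments of the \levy~process together with $E[e^{\eta(u)X(t)}] = \phi_{X(t)}(-i\eta(u)) = e^{t\,\psi_{X}(-i\eta(u))}$, the assumption $iu+\psi_{X}(-i\eta(u))=0$ collapses $E[M(t+s)\mid \mathcal{F}_s]$ to $M(s)$, making $(M(t))_{t\ge 0}$ a complex-valued martingale with respect to the natural filtration of $X$. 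The condition that $\phi_{X(1)}(-i\eta(u))$ be well defined is what makes this computation legitimate.

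Next, I would apply optional sampling at $\tau(l)$. Because $X$ has continuous sample paths, the first-passage time has no overshoot: on $\{\tau(l)<\infty\}$ one has $X(\tau(l))=l$ exactly, so $M(\tau(l)) = \exp\bigl(iu\tau(l) + \eta(u)\,l\bigr)$. If I can justify $E[M(\tau(l))]=1$, then dividing through by $e^{\eta(u)l}$ yields \eqref{eq:chf_tau} immediately.

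The step I expect to be the main obstacle is precisely this optional-stopping justification, since $\tau(l)$ is unbounded and need not be almost surely finite. The standard workaround is to apply the bounded-time optional stopping theorem at $\tau(l)\wedge T$ to obtain $E[M(\tau(l)\wedge T)]=1$, and then send $T\to\infty$ by dominated convergence. This is where the sign hypothesis $\Re(-l\eta(u))<0$, i.e.\ $l\,\Re(\eta(u))>0$, earns its keep: when $l>0$ it forces $\Re(\eta(u))>0$, and combined with $X(t)\le l$ for $t\le \tau(l)$ and $|e^{iut}|=1$ for real $u$ it yields the deterministic bound $|M(t\wedge\tau(l))|\le e^{l\,\Re(\eta(u))}$; the case $l<0$ is symmetric. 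The same bound controls the contribution of the event $\{\tau(l)=\infty\}$ (on which $e^{iu\tau(l)}$ is interpreted as $0$ in the ch.F, so that at $u=0$ one recovers the hitting probability $e^{-l\eta(0)}$). Passing to the limit and reading off $E[e^{iu\tau(l)}] = e^{-l\eta(u)}$ completes the proof.
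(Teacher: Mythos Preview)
Your approach is essentially the same exponential-martingale argument as the paper's, and is correct; in fact your truncation $\tau(l)\wedge T$ and dominated-convergence step make the optional-stopping justification more explicit than the paper's proof, which simply substitutes $\tau(l)$ for $t$ in $E[e^{iut+\eta(u)X(t)}]=1$ and then uses continuity to set $X(\tau(l))=l$. The only notable difference is that the paper treats the sign condition $\Re(-l\eta(u))\le 0$ as an a~posteriori consistency check (deduced from $|\phi_{\tau(l)}(u)|\le 1$), whereas you use the strict hypothesis up front to obtain the uniform bound $|M(t\wedge\tau(l))|\le e^{l\,\Re\eta(u)}$ needed for the limit.
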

\begin{proof}
For given $u\in\R$, if there exist $\eta(u)$ satisfying (\ref{eq:martingalecondition2}), then we have
\begin{equation}\label{eq:martingalecondition}
1=e^{t(iu+\psi_{X}(-i\eta(u)))} = E\left[e^{iu t + \eta(u) X(t)}\right],
\end{equation}
since $\log E[e^{ixX(t)}]=t\psi_{X}(x) $.
We have
\[
1 = E\left[e^{iu \tau(l) + \eta(u) X(\tau(l))}\right] 
= E\left[e^{iu \tau(l)} e^{l\eta(u)}\right]
= e^{l\eta(u)} E\left[e^{iu \tau(l)} \right],
\]
and we obtain ch.F of $\tau(l)$ as $E\left[e^{iu \tau(l)} \right]=e^{-l\eta(u)}$.
Since we have 
\[
|e^{-l\eta(u)}|=|E[e^{iu\tau(l)}]|\le E[|e^{iu\tau(l)}|]=1,
\]
and 
\[
e^{-l\eta(u)} = e^{\Re(-l\eta(u))}e^{i\Im(-l\eta(u))},
\]
we obtain the condition $\Re(-l\eta(u))\le0$. Moreover, since we have 
\[
E\left[e^{\eta(u) X(t)}\right] = \exp\left(t \psi_{X}(-i\eta(u))\right),
\]
the equation (\ref{eq:martingalecondition}) holds if and only if
$iu+\psi_{X}(-i\eta(u))=0$. 
\end{proof}
\begin{rem}
Applying Lemma \ref{Lemma:ChfTau}, we can obtain the probability density function (pdf) of $\tau(l)$  by the inverse Fourier transform as follows:
\begin{equation}\label{eq:pdf_FFT}
f_{\tau(l)}(x)=\frac{1}{2\pi}\int_{-\infty}^\infty e^{-iux}\phi_{\tau(l)}(u) du = \frac{1}{2\pi}\int_{-\infty}^\infty e^{-iux-l\eta(u)} du.
\end{equation}
\end{rem}
\begin{example}[\emph{Brownian Motion}] Let $X = (X(t))_{t\ge0}=(\mu t+\sigma B(t))_{t\ge0}$ where $(B(t))_{t\ge0}$ is Brownian motion, $\mu\in\R$, and $\sigma>0$.
Since the ch.F of $X(t)$ is 
\[
\phi_{X(t)}(u)=\exp\left(\mu  iut -t\frac{\sigma^2u^2}{2}\right)
\]
The equation (\ref{eq:martingalecondition2}) is equal to
\[
iu  + \mu \eta(u) + \frac{\sigma^2\eta(u)^2}{2}=0,
\]
and $\eta(u)$ satisfying the equation is
\[
\eta(u)=\frac{-\mu \pm \sqrt{\mu^2-2\sigma^2ui}}{\sigma^2}.
\]
For the condition $\Re\left(-l\eta(u)\right)<0$, 
we have  
\[
\eta(u)=\begin{cases}
\frac{-\mu + \sqrt{\mu^2-2\sigma^2ui}}{\sigma^2}, & \text{ if } l>0\\
\frac{-\mu - \sqrt{\mu^2-2\sigma^2ui}}{\sigma^2}, & \text{ if } l<0
\end{cases}.
\]
Hence the ch.F of the first passage time $\tau(l)$ for $X$ is
\[
\phi_{\tau(l)}(u) = \begin{cases} \exp\left(\frac{l\mu - l\sqrt{\mu^2-2\sigma^2ui}}{\sigma^2}\right) &\text{ if } l>0 \\
\exp\left(\frac{l\mu + l\sqrt{\mu^2-2\sigma^2ui}}{\sigma^2}\right) &\text{ if } l<0 
\end{cases},
\]
which is well known ch.F of inverse Gaussian distribution.
\end{example}

We cannot use Lemma \ref{Lemma:ChfTau} for the \levy~process with jumps, since $X(\tau(l)) = l$ is not true in general. To escape the problem, we define continuous approximation process $X^c=(X^c(t))_{t\ge0}$ for the \levy~process $X$ as
\[
X^c(t) = \begin{cases}
X(t) &\text{ if } t \in P\\
\displaystyle \frac{(t-t_i)X(t_{i+1})+(t_{i+1}-t)X(t_i)}{t_{i+1}-t_{i}} &\text{ if } t_i<t<t_{i+1} \text{ for } t_i \in P
\end{cases}
\]
where $P$ is a partition of a real interval $[0,T]$ as 
\[
P = \{ x_i | 0=x_1<x_2<\cdots<x_n< \cdots \}.
\]
We refer to $X^c$ as the \emph{continuously approximated process} of $X$.We have characteristic function of $X^c(t)$ as follows:
\[
\phi_{X^c(t)}(u) = \exp\left(t_i\psi_X(u)+(t_{i+1}-t_i)\psi_X(\Delta u)\right) \text{ for } t\in[t_i,t_{i+1}),
\]
where 
\[
\Delta = \frac{t-t_i}{t_{i+1}-t_i}.
\]
We use numerical approximation of $\psi_X(\Delta u) \approx \Delta \psi_X(u)$, then we have
\[
\phi_{X^c(t)}(u) \approx  \exp\left(t_i\psi_X(u)+(t_{i+1}-t_i)\Delta \psi_X(u)\right) = \phi_{X(t)}(u)
\]
After then we use Lemma \ref{Lemma:ChfTau} for $X^c$ with approximated ch.F of $X^c$.
That is if $\eta(u)$ satisfy \eqref{eq:martingalecondition2}, then the first passage time $\tau(l)$ of $X^c$ has an approximation of the characteristic function as
\[
\phi_{\tau(l)}(u) \approx e^{-l\eta(u)}.
\]

\subsection{Cases of NTS Process and Normal Inverse Gaussian Process}
Let $\alpha\in(0,2)$, $\theta>0$, $\beta \in\R$, $\gamma>0$ and $mu\in\R$.
Consider a pure jump \levy~process $X = (X(t))_{t\ge0}$ whose ch.F $\phi_{X(t)}$ is equal to
\[
\phi_{X(t)}(u)  =\exp\left( (\mu-\beta) iut -\frac{2\theta^{1-\frac{\alpha}{2}}}{\alpha}
t\left(\left(\theta-i\beta u+\frac{\gamma^2 u^2}{2}\right)^{\frac{\alpha}{2}}-\theta^{\frac{\alpha}{2}}\right)\right).
\]
The process $X$ is referred to as the the NTS process with parameters $(\alpha$, $\theta$, $\beta$, $\gamma$, $\mu)$ and denoted by $X\sim \textup{NTS}(\alpha$, $\theta$, $\beta$, $\gamma$, $\mu)$. The NTS process has finite exponential moments for a closed interval, That is, $E[e^{aX(t)}]<\infty$
if 
\[
a\in\left[\frac{1}{\gamma^2}\left(-\beta-\sqrt{\beta^2+2\gamma^2\theta}\right), \frac{1}{\gamma^2}\left(-\beta+\sqrt{\beta^2+2\gamma^2\theta}\right)\right].
\]

If $X\sim \textup{NTS}(\alpha$, $\theta$, $\beta$, $\gamma$, $\mu)$ with $\alpha = 1$, then the process $X$ is referred to as the the normal inverse Gaussian (NIG) process with parameters $(\theta$, $\beta$, $\gamma$, $\mu)$ and denoted by $X\sim \textup{NIG}(\theta$, $\beta$, $\gamma$, $\mu)$. The ch.F of the NIG process $X$ is equal to
\[
\phi_{X(t)}(u)  =\exp\left( (\mu-\beta) iut+2t\theta - 2t\theta^{\frac{1}{2}}
\left(\theta-i\beta u+\frac{\gamma^2 u^2}{2}\right)^{\frac{1}{2}}\right).
\]

If $X\sim \textup{NTS}(\alpha$, $\theta$, $\beta$, $\gamma$, $\mu)$ where $\mu = 0$ and $\gamma=\sqrt{1-\beta^2 \left(\frac{2-\alpha}{2\theta}\right)}$ with $-\sqrt{ \frac{2\theta}{2-\alpha}}<\beta<\sqrt{ \frac{2\theta}{2-\alpha}}$ then $E[X(t)]=0$ and $\var(X(t))=t$ for all $t>0$.
In this case, the process $X$ is referred to as the \textit{standard NTS process} with parameters $(\alpha$, $\theta$, $\beta)$ and denoted by $X\sim \textup{stdNTS}(\alpha$, $\theta$, $\beta)$. With the same argument,  $X\sim \textup{NIG}(\theta$, $\beta$, $\gamma$, $\mu)$ where $\mu = 0$ and $\gamma=\sqrt{1-\frac{\beta^2}{2\theta}}$ with $-\sqrt{ 2\theta}<\beta<\sqrt{2\theta}$,  then $E[X(t)]=0$ and $\var(X(t))=t$ for all $t>0$.
In this case, the process $X$ is referred to as the \textit{standard NIG process} with parameters $(\theta$, $\beta)$ and denoted by $X\sim \textup{stdNIG}(\theta$, $\beta)$.

By applying Lemma \ref{Lemma:ChfTau} to the process $X\sim \textup{NIG}(\theta$, $\beta$, $\gamma$, $\mu)$, we find $\eta(u)$ satisfying (\ref{eq:martingalecondition2}) that is
\[
0=\left(2\theta+iu\right) +(\mu-\beta)\eta(u) -2\theta^{\frac{1}{2}}
\left(\theta-\beta \eta(u)-\frac{\gamma^2 \eta(u)^2}{2}\right)^{\frac{1}{2}}
\]
or
\[
\left((\mu-\beta)^2+2\theta\gamma^2\right) \eta(u)^2 + 2\left(2\mu\theta+(\mu-\beta) iu\right) \eta(u) - u^2+4\theta iu = 0.
\]
Finally, we obtain the solution 
\[
\eta(u) = \frac{-\left(2\mu\theta+(\mu-\beta)iu\right)\pm\sqrt{\left(2\mu\theta+(\mu-\beta) iu\right)^2+\left((\mu-\beta)^2+2\theta\gamma^2\right)\left(u^2-4\theta iu\right)}}{(\mu-\beta)^2+2\theta\gamma^2}.
\]

Since NIG process is a pure jump \levy~process, we cannot use Lemma \ref{Lemma:ChfTau} directly. Instead, we use a continuously approximated process $X^c$ of the process $X$ for a partition $P$. Satisfying the condition $\Re\left(-l\eta(u)\right)<0$ for all $u$, we have the ch.F of the first passage time $\tau(l)$ for process $X^c$ as
\[
\phi_{\tau(l)}(u) \approx \begin{cases} \exp\left(-l\eta^+(u)\right) &\text{ if } l>0 \\
\exp\left(-l\eta^-(u)\right) &\text{ if } l<0 
\end{cases}
\]
where
\[
\eta^+(u) = \frac{-\left(2\mu\theta+(\mu-\beta) iu\right)+\sqrt{\left(2\mu\theta+(\mu-\beta) iu\right)^2+\left((\mu-\beta)^2+2\theta\gamma^2\right)\left(u^2-4\theta iu\right)}}{(\mu-\beta)^2+2\theta\gamma^2}, 
\]
and 
\[
\eta^-(u) = \frac{-\left(2\mu\theta+(\mu-\beta) iu\right)-\sqrt{\left(2\mu\theta+(\mu-\beta) iu\right)^2+\left((\mu-\beta)^2+2\theta\gamma^2\right)\left(u^2-4\theta iu\right)}}{(\mu-\beta)^2+2\theta\gamma^2}. 
\]

For the numerical illustration, we present the pdf's of the standard NIG distributions and the first passage time of the standard NIG process in Figure \ref{fig:stdNIG}. To draw figure, we use the fast Fourier transform method for the equation (\ref{eq:pdf_FFT}) with the ch.F of the first passage time. 

Let $X=(X(t))_{t\ge0}\sim \textup{stdNIG}(\theta,-\beta)$, and $Y=(Y(t))_{t\ge0}\sim \textup{stdNIG}(\theta,\beta)$ with parameter $\theta=1$ and $\beta = 1/2$, and let $B=(B(t))_{t\ge0}$ be the standard Brownian motion. We consider continuously approximated processes $X^c$ and $Y^c$ for $X$ and $Y$, respectively. 
In the both left and right plates, the solid curve is for $X$, the dashed curve is for $Y$ and the dash-dot curve is for $B$. The left plate, we show that the pdf's of $X(1)=X^c(1)$(solid) and $Y(1)=Y^c$(dashed), that are skewed left, and skewed right, respectively. The pdf of $B(1)$ (dash-dot) is a standard normal pdf, which is symmetric. 
Let $\tau_{X^c}(l)$, $\tau_{Y^c}(l)$ and $\tau_B(l)$ be the first passage time of $X^c$, $Y^c$, and $B$, respectively, for the level $l=3$. In the right plate, we presented the pdf's of $\tau_{X^c}(l)$ (solid), $\tau_{Y^c}(l)$(dashed) and $\tau_B(l)$(dash-dot). 
Since the distribution of $X(1)$ ($Y(1)$) is skewed left (right) while the distribution of $B(1)$ is symmetric, the mod of $\tau_{X^c}(l)$($\tau_{Y^c}(l)$) is located a little right (left) of the mod of $\tau_B(l)$.

By applying Lemma \ref{Lemma:ChfTau} to the process $X\sim \textup{NTS}(\alpha$, $\theta$, $\beta$, $\gamma$, $\mu)$, we find $\eta(u)$ satisfying (\ref{eq:martingalecondition2}) that is
\begin{align}\label{eq:martingaleconditionNTS}
0=iu +(\mu-\beta)\eta(u) -\frac{2\theta^{1-\frac{\alpha}{2}}}{\alpha}
\left(\left(\theta-\beta \eta(u)-\frac{\gamma^2 \eta(u)^2}{2}\right)^{\frac{\alpha}{2}}-\theta^{\frac{\alpha}{2}}\right),
\end{align}
which has no explicit solution, but we can find the solution numerically. Moreover, the NTS process is also a pure jump \levy~process, we find the ch.F of the first passage time using the continuously approximated process of the NTS process.

For the numerical illustration, we present the function $\eta(u)$, ch.F's, and pdf's of the standard NTS distributions and the pdf of the first passage time of the standard NTS process in Figure \ref{fig:firsthittingtime_NTS}. We consider two standard NTS processes $X\sim\textup{stdNTS}(\alpha,\theta,-\beta)$ and $Y\sim\textup{stdNTS}(\alpha,\theta,\beta)$ with parameters $\alpha=1.25$, $\theta = 1$ and $\beta = 0.3$, and the level $l=3$. The upper left and right plates are $\eta(u)$'s for $X$ and $Y$, respectively, which satisfy \eqref{eq:martingaleconditionNTS}. The middle left and right plates are the $e^{-l\eta(u)}$ for the stdNTS$(1.25, 1, -0.3)$ and stdNTS$(1.25, 1, 0.3)$, respectively. Pdf's of $X(1)$ and $Y(1)$ are the dashed and solid curves, respectively, on the bottom left plate. 

Let $X^c$ and $Y^c$ be continuously approximated process with respect to $X$ and $Y$. Pdf's of the first passage time of $X^c$ and $Y^c$ are numerically approximated as the dashed and solid curves, respectively, on the bottom right plate. Dash-dot curves of bottom left and right plates are pdf of standard normal distribution and pdf of the first passage time of the standard Brownian motion, respectively. For the same arguments as standard NIG case, we have mode of the dashed and solid curves are located left and right of the dash-dot curve, respectively in the bottom right plate.

Finally, the pdf obtained by the characteristic function is compared with the first passage time distribution of simulated sample paths. We denote the pdf's of $\tau_{X^c}(1)$ and $\tau_{Y^c}(1)$ as $f_{\tau_{X^c}}$ and $f_{\tau_{Y^c}}$, respectively.  We generate 20,000 sample paths of stdNTS$(1.25, 1, -0.3)$, using inverse transform method explained in \cite{RachevKimBianchiFabozzi:2011a}. That is  $\{X_j(n\varDelta t)| n=1,2,\cdots, 1,440\}$ with time step $\varDelta t = 1/48$ year fraction for $j=1,2,\cdots, 20,000$. Then we obtain 30 years sample path. Setting $l=3$ as above example, We find the set of the first hitting time as \[
\mathcal T = \{n_j\varDelta t | n_j=\min\{n \text{ such that } X_j(n\varDelta t)>l| n=1,\cdots, 1,440\}, j=1,\cdots, 20,000\}.
\] We present the relative histogram for $\mathcal T$ and $f_{\tau_{X^c}}$ on the left plate of Figure \ref{fig:simulation_stdNTS_HittingTime_distribution}. We do the same test for stdNTS$(1.25, 1, 0.3)$, and present the relative histogram for the simulation based first passage time of stdNTS$(1.25, 1, 0.3)$ and $f_{\tau_{Y^c}}$ on the right plate of Figure \ref{fig:simulation_stdNTS_HittingTime_distribution}. 


\subsection{Case of CGMY Process}
Let $\alpha\in(0,2)$, $C, \lambda_+, \lambda_->0$, and $\mu\in\R$. 
The pure jump \levy~process $X$ whose ch.F is equal to 
\begin{align*}
\phi_{X(t)}(u)  =\exp\Big(& (\mu-C\Gamma(1-\alpha)(\lambda_+^{\alpha-1}-\lambda_-^{\alpha-1})) iut \\
&-tC\Gamma(-\alpha)\left((\lambda_+-iu)^{\alpha}-\lambda_+^{\alpha}+(\lambda_-+iu)^{\alpha}-\lambda_-^{\alpha}\right)\Big)
\end{align*}
is referred to as CGMY tempered stable process (See \cite{CGMY:2002}) or CGMY process with parameters $(\alpha$, $C$, $\lambda_+$, $\lambda_-$, $\mu)$,and denoted by $X\sim \textup{CGMY}(\alpha$, $C$, $\lambda_+$, $\lambda_-$, $\mu)$. The CGMY process has finite exponential moments for a closed interval, That is, $E[e^{aX(t)}]<\infty$
if $a\in\left[-\lambda_-, \lambda_+\right]$.
If $X\sim \textup{CGMY}(\alpha$, $C$, $\lambda_+$, $\lambda_-$, $\mu)$ where $\mu = 0$ and $C = \left(\Gamma(2-\alpha)(\lambda_+^{\alpha-2}+\lambda_-^{\alpha-2})\right)^{-1}$ then $E[X(t)]=0$ and $\var(X(t))=t$ for all $t>0$.
In this case, the process $X$ is referred to as the \textit{standard CGMY process} with parameters $(\alpha$, $\lambda_+$, $\lambda_-)$ and denoted by $X\sim \textup{stdCGMY}(\alpha$, $\lambda_+$, $\lambda_-)$.

For given $u$, we find $\eta(u)$ satisfying (\ref{eq:martingalecondition2}) that is
\begin{align}\label{eq:martingaleconditionCTS}
0 =& iu +(\mu-C\Gamma(1-\alpha)(\lambda_+^{\alpha-1}-\lambda_-^{\alpha-1}))\eta(u) \\
\nonumber
&-C\Gamma(-\alpha)\left((\lambda_+-\eta(u))^{\alpha}-\lambda_+^{\alpha}+(\lambda_-+\eta(u))^{\alpha}-\lambda_-^{\alpha}\right)
\end{align}
which has no explicit solution. As the NTS process case in (\ref{eq:martingaleconditionNTS}), we are also able to find the solution of (\ref{eq:martingaleconditionCTS}) numerically, and ch.F is as the equation in Lemma \ref{Lemma:ChfTau}.

For the numerical illustration, we consider two standard CGMY processes $X\sim$ stdCGMY( $0.75$, $3$, $1$) and $Y\sim$stdCGMY($0.75$, $1$, $3$), and the level $l=3$. Also we take $X^c$ and $Y^c$ which are continuously approximated process of $X$ and $Y$, respectively.
We present the function $\eta(u)$, ch.F's, and pdf's of the standard CGMY distributions and the pdf of the first passage time of the standard CGMY process in Figure \ref{fig:firsthittingtime_CGMY}.  The upper left and right plates are the $\eta(u)$ for stdCGMY($0.75$, $3$, $1$) and stdCGMY($0.75$, $1$, $3$), respectively. The middle left and right plates are the chF's for stdCGMY($0.75$, $3$, $1$) and stdCGMY($0.75$, $1$, $3$), respectively. Pdf's of stdCGMY($0.75$, $3$, $1$) and stdCGMY($0.75$, $1$, $3$) are the dashed and solid curves, respectively, on the bottom left plate. Pdf's of the first passage times of $X^c$ and $Y^c$ are the dashed and solid curves, respectively, on the bottom right plate. Dash-dot curves of bottom left and right plates are pdf of standard normal distribution and pdf of the first passage time of the standard Brownian motion, respectively. For the same arguments as standard NIG case, we have mode of the dashed and solid curves are located left and right of the dash-dot curve, respectively in the bottom right plate.

We can do the same simulation experiment for the stdCGMY($0.75$, $1$, $3$) and stdCGMY( $0.75$, $3$, $1$) as the standard NTS case in the previous section. We omit to show the result since the result of CGMY simulation cases are very similar as of the NTS simulation cases.

\section{Application to Exotic Option Pricing and Numerical Illustration}
Let $X = (X(t))_{t\ge0}$ and $X^c = (X^c(t))_{t\ge0}$ be a \levy~process and its continuously approximated process, respectively, and suppose that there is an closed interval $I$ containing 0 such that $E[e^{a X(t)}]<\infty$ for any $a\in I$. We assume that there exist $\eta$ satisfying the condition of Lemma \ref{Lemma:ChfTau} for $X$, and $\tau(l)$ is the first hitting time given by (\ref{eq:tau(l)}). 

Let $r$ and $d$ be the risk free rate of return and the continuous dividend rate of a given underlying asset, respectively.
The underlying asset price process $(S(t))_{t\ge0}$ is assumed as
\[
S(t) = S(0) e^{X(t)}.
\]
All of the market models in this paper are based on the risk-neutral world which has no-arbitrage. So we assume that the discount price process $(\tilde S(t))_{t\ge0}$ with $\tilde S(t)=e^{-(r-d)t}S(t)$ is martingale. In this case, we referred to the risk-neutral price model as \textit{\levy~market model}.

The class of TS process is a subclass of \levy~process including NTS and CGMY process. The \levy~process option pricing models (i.e. \levy model) with TS processes are often used for the option pricing theory. The model can capture the volatility smile effect of option market by describing fattails and skewness of risk-neutral measure (See \cite{Boyarchenko_Levendorskii:2002}, \cite{Cont_Tankov:2004}, \cite{Schoutens:2003}, and \cite{RachevKimBianchiFabozzi:2011a}).

~\\
\noindent\textit{NTS Market model}\\
In the \levy~market model, suppose $(X(t))_{t\ge0}$ is the NTS process with parameters $(\alpha$, $\beta$, $\theta$, $\gamma$, $\mu)$, where
\[
\mu = r - d +\beta +\frac{2\theta^{1-\frac{\alpha}{2}}}{\alpha}
\left(\left(\theta-\beta -\frac{\gamma^2}{2}\right)^{\frac{\alpha}{2}}-\theta^{\frac{\alpha}{2}}\right),
\]
then the discount process $(\tilde S(t))_{t\ge0}$ is martingale, since
\[
E[\tilde S(t)]=e^{-(r-d)t}S(0)E[ \exp( X(t) )] =S(0).
\]
In this case, we say that the underlying asset price process follows the \textit{NTS market model}. In particular, if $\alpha=1$ we say that it follows \textit{NIG market model}.

~\\
\noindent\textit{CGMY Market model}\\
In the \levy~market model, suppose $(X(t))_{t\ge0}$ is the CGMY process with parameters $(\alpha, C, \lambda_+, \lambda_-, \mu)$, where
\[
\mu = r - d +C\Gamma(1-\alpha)(\lambda_+^{\alpha-1}-\lambda_-^{\alpha-1}) +C\Gamma(-\alpha)\left((\lambda_+-1)^{\alpha}-\lambda_+^{\alpha}+(\lambda_-+1)^{\alpha}-\lambda_-^{\alpha}\right),
\]
then the discount process $(\tilde S(t))_{t\ge0}$ is martingale for the same argument of the NTS market model case. In this case, we say that the underlying asset price process follows the \textit{CGMY market model}.

~\\
\noindent\textit{Model calibration}\\
We calibrate parameters of the three TS (NIG, NTS and CGMY) market models by using the S\&P 500 index call and put options. We obtain the S\&P 500 index call and put price from OptionMetrics$^{TM}$ provided by Wharton Research Data Services. The parameters are calibrated by the least square curve fit and the call and put option prices of NTS, NIG, and CGMY models are calculated by Fast Fourier Transform method by \cite{Lewis:2001} and \cite{CarrMadan:1999}. For the benchmark, we calibrate the BS model (\cite{BlackScholes:1973}) parameter $\sigma$ which is referred to as \textit{volatility}.

In this calibration, we use the option prices of October 8, 2014 on which underlying S\&P 500 index was $S(0)=1968.89$, risk-neutral interest rate was $r=0.12\%$, and dividend rate for S\&P 500 index was $d=1.94\%$. We select calls and puts with moneyness (= (strike price) / (current underlying price)) in between 0.9 and 1.1, and having time to maturities, 7 days, 32 days or 52 days. We calibrate parameters for call prices and put prices separately. The calibrated parameters are presented in Table \ref{Table:Calibration}. In the table, we provide three error estimators together:  the average absolute error (AAE), the average absolute error as a percentage of the mean price (APE), and the  root-mean-square error (RMSE),\footnote{See \cite{Schoutens:2003} for additional details.} defined as follows:
\begin{align*}
&    \textup{AAE} =
    \sum_{j=1}^N \frac{|{P}_j -
    \widehat{P}_j|}{N},
~~    \textup{APE} = \frac{\sum_{j=1}^N \frac{|{P}_j -
    \widehat{P}_i|}{N}}{\sum_{j=1}^N\frac{{P}_j}{N}},
~~    \textup{RMSE} = \sqrt{
    \sum_{j=1}^N \frac{({P}_j -
    \widehat{P}_j)^2}{N}},
\end{align*}
where $\widehat{P}_n$ and ${P}_n$ are model prices and observed market prices of options,  $n\in\{1,\ldots,N\}$, and $N$ is the number of observed call option prices.  As reported many literature including \cite{CGMY:2002}, \cite{Schoutens:2003}, and \cite{RachevKimBianchiFabozzi:2011a}, the NIG, NTS, and CGMY market models have better calibration performance than the BS model, that is, the three error estimators for those three \levy~market models are remarkably smaller than the three error estimator for the BS model.

~\\
\noindent\textit{Approximation for Characteristic function of $\tau(l)$}\\
As we discussed in Section \ref{section2}, the characteristic function of $\tau(l)$ is numerically approximated as $\phi_{\tau(l)}(u)\approx e^{-l\eta(u)}$ for NTS and CGMY market model. With this approximation we discuss pricing Perpetual American Call and Put, and pricing Barrier Option in the following subsections. In the following market, we assume that priving process $X$ of stock price process $S$ is the continuously approximated process for NTS (including NIG) and CGMY processes.

\subsection{Application to Perpetual American Call and Put}
We consider a perpetual American call and put options with strike price $K$. The perpetual call price is equal to 
\[
C_\text{perpetual}=\begin{cases}
\displaystyle \frac{K}{\eta^+(ir)-1}\left(\frac{S(0)(\eta^+(ir)-1)}{K\eta^+(ir)}\right)^{\eta^+(ir)} &\text{ if } S(0)\le L^+ \\
S(0)-K &\text{ if } S(0)>L^+
\end{cases},
\]
where
\[
L^+ = \frac{\eta^+(ir)K}{\eta^+(ir)-1},
\]
and $\eta^-(ir)$ is the value satisfying \eqref{eq:martingalecondition2} and \eqref{eq:chf_tau} for $l>0$ and $u=ir$.
The perpetual put price is equal to 
\[
P_\text{perpetual}=\begin{cases}
\displaystyle \frac{K}{1-\eta^-(ir)}\left(\frac{S(0)(\eta^-(ir)-1)}{K\eta^-(ir)}\right)^{\eta^-(ir)} &\text{ if } S(0)\ge L^- \\
K-S(0) &\text{ if } S(0)<L^-
\end{cases},
\]
where 
\[
L^- = \frac{\eta^-(ir)K}{\eta^-(ir)-1},
\]
and $\eta^-(ir)$ is the value satisfying \eqref{eq:martingalecondition2} and \eqref{eq:chf_tau} for $l<0$ and $u=ir$.
More details for the perpetual American call and put prices are presented in Appendix.

Using the perpetual call/put formula above, we calculate prices of the perpetual American call and put options for the NIG, NTS and CGMY market models using the calibrated parameters on October 8, 2014 presented in Table \ref{Table:Calibration}. We also use the underlying price, risk-neutral rate of return, and dividend rate from the data on October 8, 2014. Since perpetual call and put prices for NIG, NTS and CGMY market models are very similar, we present only prices (solid curves) under the CGMY market model in Figure \ref{fig:PerpetualCallPutPrices}. for the benchmark, we calculate perpetual American option prices (dash-dot curves) using BS model. In this context, the left plate of the figure is perpetual call prices and tie right plate is the perpetual put prices for moneyness from 0.8 to 1.2. We find that BS prices are more or less smaller than CGMY prices.

\subsection{Application to Barrier Option Pricing}
The barrier option is one of the most popular exotic option. 
Barrier options are classified by the knock-in barrier option, and the knock-out barrier option. The knock-in barrier option is activated when the underlying asset price hit a given barrier level. The knock-out barrier option is alive until the underlying price hit the barrier level, but once the underlying price hit the barrier level, the price of the option becomes zero. The knock-in barrier option is divided by the up-and-in barrier option and the down-and-in barrier option. If the barrier level is upper (lower) than the current underlying prices, then the knock-in barrier option is referred to as the up-and-in (down-and-in) barrier option. The knock-out barrier option is also divided by the up-and-out barrier option and the down-and-out barrier option. If the barrier level is upper (lower) than the current underlying prices, then the knock-out barrier option is referred to as the up-and-out (down-and-out) barrier option\footnote{See \cite{Hull:2015} for more details.}. 

In this section, we discuss pricing of European style barrier call and put options.
Let $B$ be the barrier level, $K$ be the strike price of call and put, and $T$ be the time to maturity. Suppose that the current underlying asset price is $S(0)$, and let $l = \log(B/S(0))$. Then we have the down-and-in and down-and-out options if $l<0$ and the up-and-in and up-and-out options if $l>0$. Let $r$ and $d$ be the risk free rate of return and the continuous dividend rate of a given underlying asset, respectively.

The down-and-in call ($c_{di}$), up-and-in put ($p_{ui}$) are priced by 
\[
c_{di} = \frac{e^{-rT}K^{1+\rho}}{(2\pi)^2 B^\rho}\int_{-\infty}^\infty \left(\frac{B}{K}\right)^{iu} \left(\frac{H(u)}{(\rho-iu)(1+\rho-iu)}\right)du
, ~~~\rho<-1
\]
and
\[
p_{ui} = \frac{e^{-rT}K^{1+\rho}}{(2\pi)^2 B^\rho}\int_{-\infty}^\infty \left(\frac{B}{K}\right)^{iu} \left(\frac{H(u)}{(\rho-iu)(1+\rho-iu)}\right)du
, ~~~\rho>0,
\]
and the up-and-in call ($c_{ui}$) and down-and-in put ($p_{di}$) are priced by 
\[
c_{ui} = \begin{cases}
\frac{e^{-rT}K^{1+\rho}}{(2\pi)^2 B^\rho}\int_{-\infty}^\infty \left(\frac{B}{K}\right)^{iu} \left(\frac{H(u)}{(\rho-iu)(1+\rho-iu)}\right)du & \text{ if } K\le B \\
\frac{e^{-rT}K^{1+\rho}}{2\pi S(0)^\rho}\int_{-\infty}^\infty \left(\frac{S(0)}{K}\right)^{iu} \left(\frac{\phi_{X(T-t)}(u+i\rho)}{(\rho-iu)(1+\rho-iu)}\right)du & \text{ if } K> B
\end{cases}
, ~~~\rho<-1,
\]
and
\[
p_{di} = \begin{cases}
\frac{e^{-rT}K^{1+\rho}}{(2\pi)^2 B^\rho}\int_{-\infty}^\infty \left(\frac{B}{K}\right)^{iu} \left(\frac{H(u)}{(\rho-iu)(1+\rho-iu)}\right)du & \text{ if } K\ge B \\
\frac{e^{-rT}K^{1+\rho}}{2\pi S(0)^\rho}\int_{-\infty}^\infty \left(\frac{S(0)}{K}\right)^{iu} \left(\frac{\phi_{X(T-t)}(u+i\rho)}{(\rho-iu)(1+\rho-iu)}\right)du & \text{ if } K< B
\end{cases}
, ~~~\rho>0,
\]
where
\[
H(u) = \int_{-\infty}^\infty \frac{e^{T\psi_X(u+i\rho)}-e^{-ivT}}{\psi_X(u+i\rho)+iv}\phi_{\tau(l)}(v)dv.
\]
Since knock-out call(put) option price can be calculated by the vanilla call(put) price minus knock-in call(put) price, we do not consider the knock-out call and put options in this section. More Details and proofs for the barrier option pricing are explained in Appendix. 

Notably, we calculate those four knock-in option prices numerically. We use the estimated parameters in Table \ref{Table:Calibration} for call and put prices of October 8, 2014, on which underlying S\&P 500 index was $S(0)=1968.89$, risk-neutral interest rate was $r=0.12\%$, and dividend rate for S\&P 500 index was $d=1.94\%$. We assume that the time to maturity is $T$ = 1 year and we select strike prices between 1600 and 2300, that is, $K\in\{1600, 1625, \cdots,  2300\}$. Since the barrier option prices under NIG and NTS market models are not remarkably different from the prices under the CGMY market models, we just show the barrier option prices under CGMY market model. For the benchmark, we compare the CGMY prices to the barrier option prices based on BS model. 

In Figure \ref{fig:DownAndInCallPutPrices}, the down-and-in call and put option prices are presented in the left and right plates, respectively, for the barrier level $B$ is 1750. 
The barrier option on the CGMY model is typically greater than the price on BS model, since the CGMY distribution is skewed left and the CGMY distribution has fatter tails than Gaussian distribution. because of the left skewness of the CGMY distribution, the probability that the knock-in option be alive is larger than the case of BS model.
In Figure \ref{fig:UpAndInCallPutPrices}, the up-and-in call and put option prices are presented in the left and right plates, respectively, for the barrier level $B$ is 2200. That being the case, the BS barrier option prices are different from CGMY barrier option prices. 

In addition, we implement the four barrier option pricing formula using the fast Fourier transform algorithm. To accomplish this numerical calculation, we use Matlab$^{TM}$ 2015a on a PC equipped with Intel CORE i7$^{TM}$ processor (3.00GHz Dual core) and MS Windows$^{TM}$ 10 operating system. 
It took 59.9 seconds to obtain the down-and-in call option prices, and took 67.6 seconds to obtain down-and-in put option prices.
It took 69.3 seconds to obtain down-and-in call option prices, and took 61.4 seconds to obtain down-and-in put option prices.

\section{Conclusion}
In this paper, we found an approximation of the characteristic function of the first passage time for a \levy~process using the martingale approach and the continuously approximated process. More precisely, we found approximations of ch.F's and pdf's of the first passage time of three TS processes (NIG, NTS, and CGMY processes), explicitly or numerically. It was applied to price the perpetual American option and the barrier option. Numerical illustrations were provided together, for the calibrated parameters using the market call and put prices. We obtained one tractable method to find those exotic options under the TS market model. The result can be used for analyzing default probability in credit risk management also.

~\\
\textbf{Acknowledgment}
I am grateful to Professor Kyuong Jin Choi, in Haskayne School of Business, University of Calgary, who gave the motivation to complete of this research. Also, all remaining errors are entirely my own.

~\\
~\\

\bibliographystyle{plainnat}
\bibliography{refs_aaron_TS_perpetual_american}

\clearpage

\begin{table}
\begin{center}
\caption{\label{Table:Calibration} Model Calibration}
\begin{tabular}{ccccccc}
\hline
Call/Put & Model & Parameters & AAE & APE & RMSE \\
\hline
Call  
 & BS & $\sigma=0.1267$ & 3.3619 & 0.1042 & 4.1806 \\ 
 \cline{2-6}
 & NIG & $\theta=5.1045$ , $\beta=-0.3356$ , & 2.4056 & 0.0746 & 2.8746 \\ 
 &      & $\gamma=0.1042$ \\
 \cline{2-6}
 & NTS & $\alpha=1.0808$ , $\theta=5.2150$ ,  & 2.3939 & 0.0742  & 2.8637 \\ 
 &      & $\beta=-0.3647$ , $\gamma=0.1010$ \\
 \cline{2-6}
 & CGMY & $\alpha=0.7250$ , $C=0.5019$ , & 2.4107 & 0.0747 & 2.8810 \\ 
 &      & $ \lambda_+=73.5549$ , $\lambda_-=11.5265$ \\
\hline 
Put  
 & BS & $\sigma=0.1396$ & 3.5132 & 0.1507 & 4.3561 \\ 
 \cline{2-6}
 & NIG & $\theta=4.2571$ , $\beta=-0.3466$ , & 1.3995 & 0.0600 &  1.6910 \\ 
 &     &  $\gamma=0.0984$\\
 \cline{2-6}
 & NTS & $\alpha=1.0980$ , $\theta=4.4348$ ,  & 1.3920 & 0.0597 & 1.6846 \\ 
 &      &  $\beta=-0.3864$ , $\gamma=0.0930$\\
 \cline{2-6}
 & CGMY & $\alpha=0.7066$ , $C=0.4743$ ,  & 1.4006 & 0.0601 & 1.6957 \\ 
 &      &  $\lambda_+=81.2931$ , $\lambda_-=9.7529$ \\
 \hline
 \end{tabular}
\end{center}
\end{table}

\begin{figure}[t]
\hspace{-1cm}
\includegraphics[width=7cm]{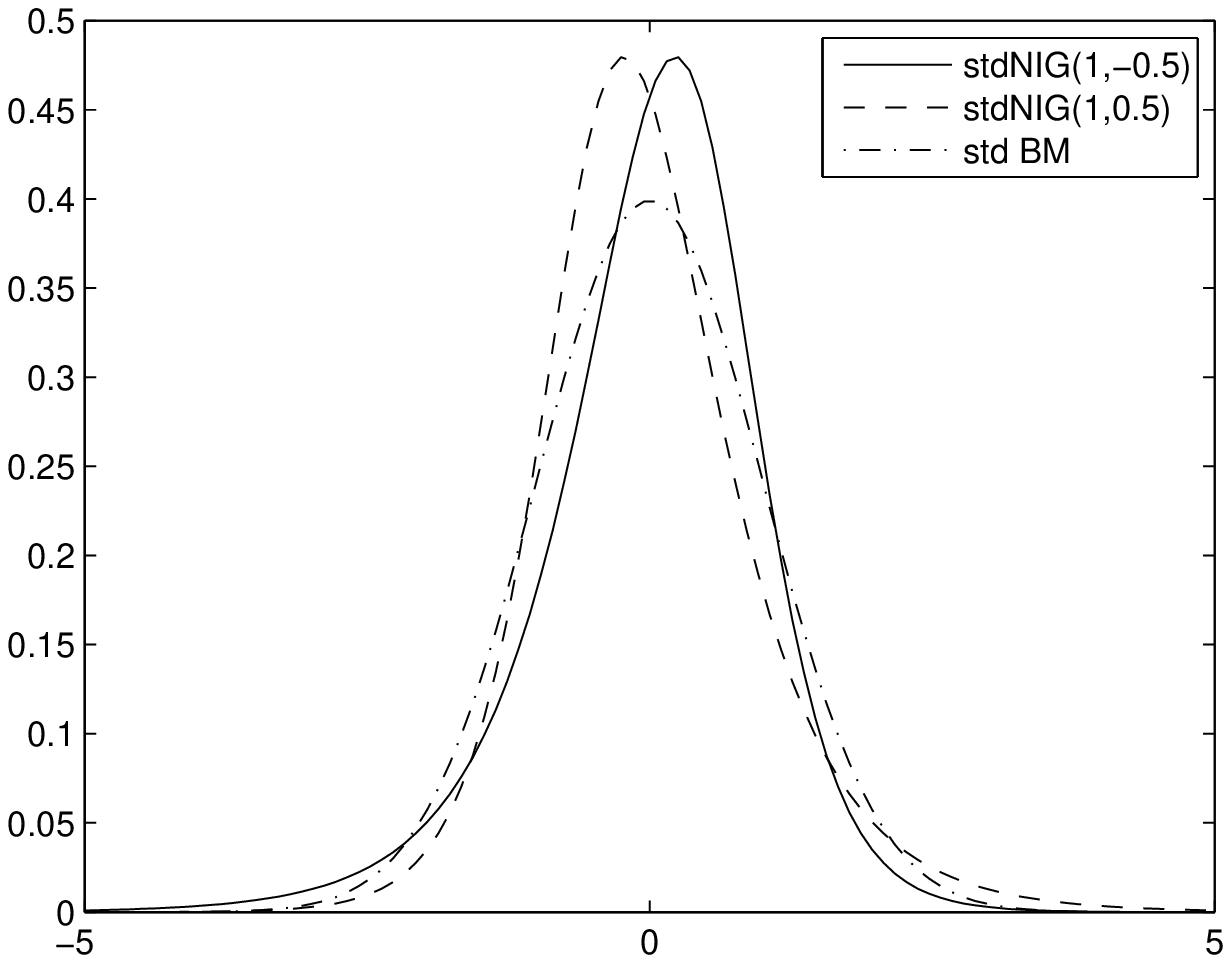}
\includegraphics[width=7cm]{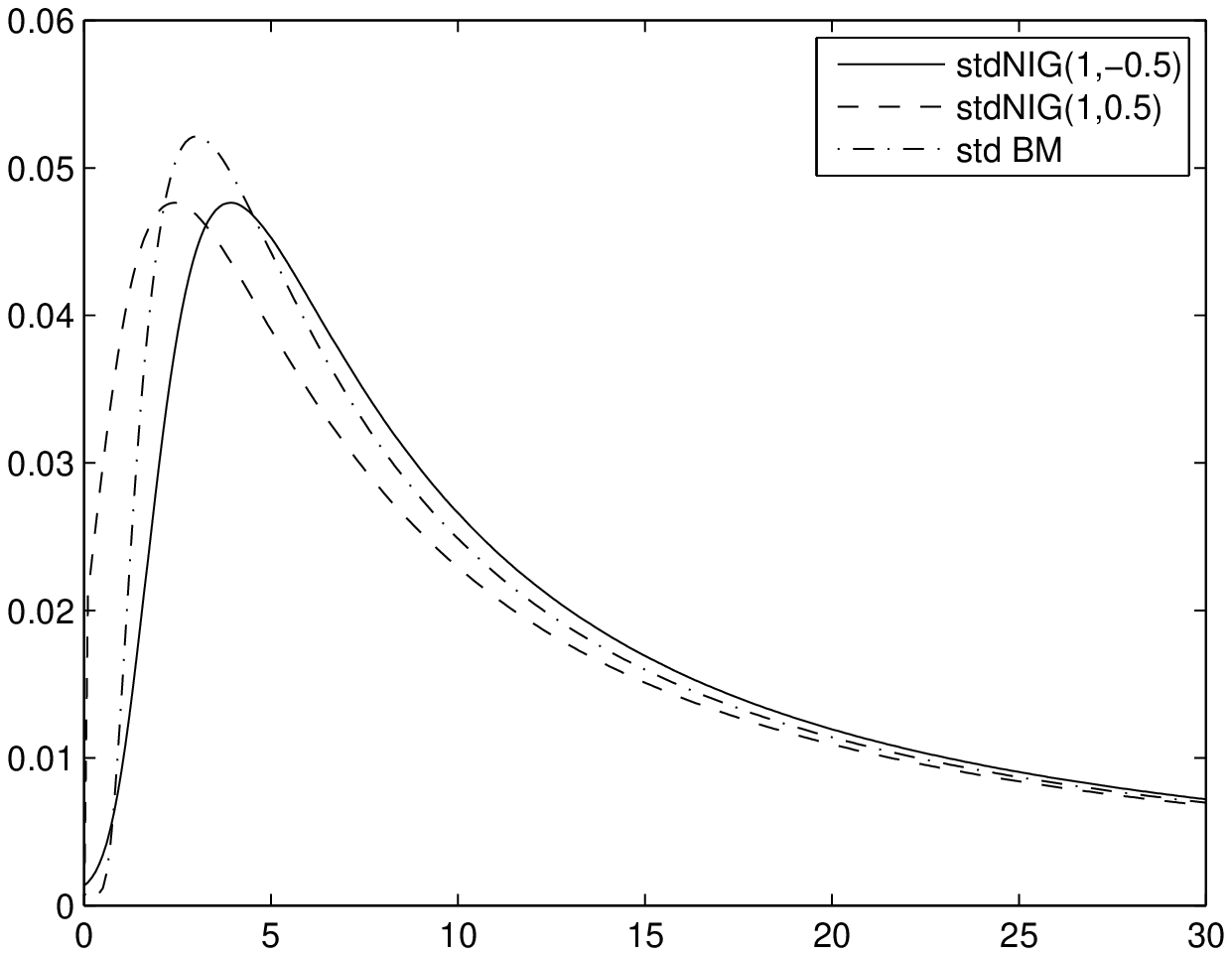}
\caption{\label{fig:stdNIG}Pdf's of standard NIG distributions (left) and the first passage time of standard NIG processes (right). }
\end{figure}
\clearpage

\begin{figure}[t]
\begin{center}
\hspace{-1cm}
\includegraphics[width=6.5cm]{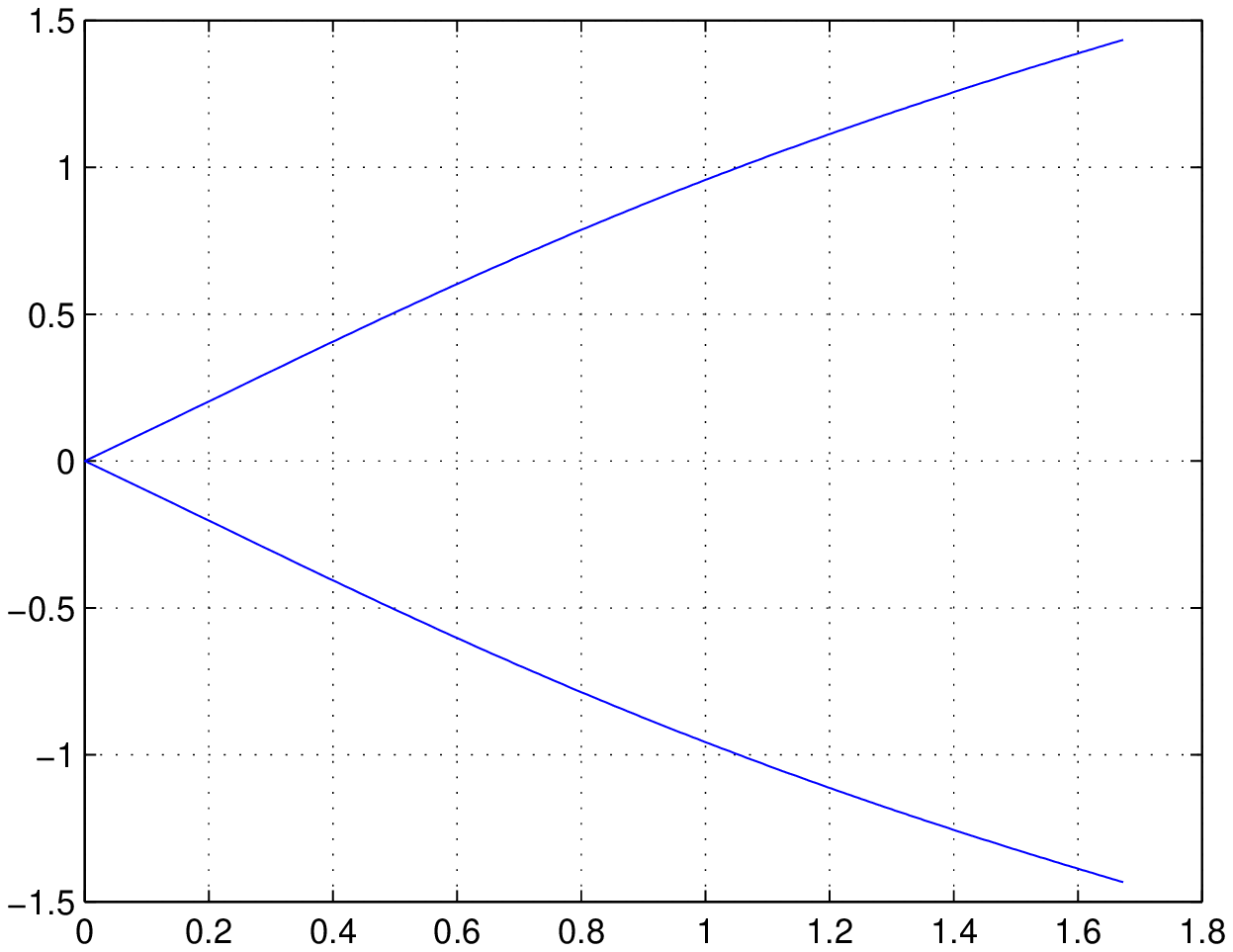}
\includegraphics[width=6.5cm]{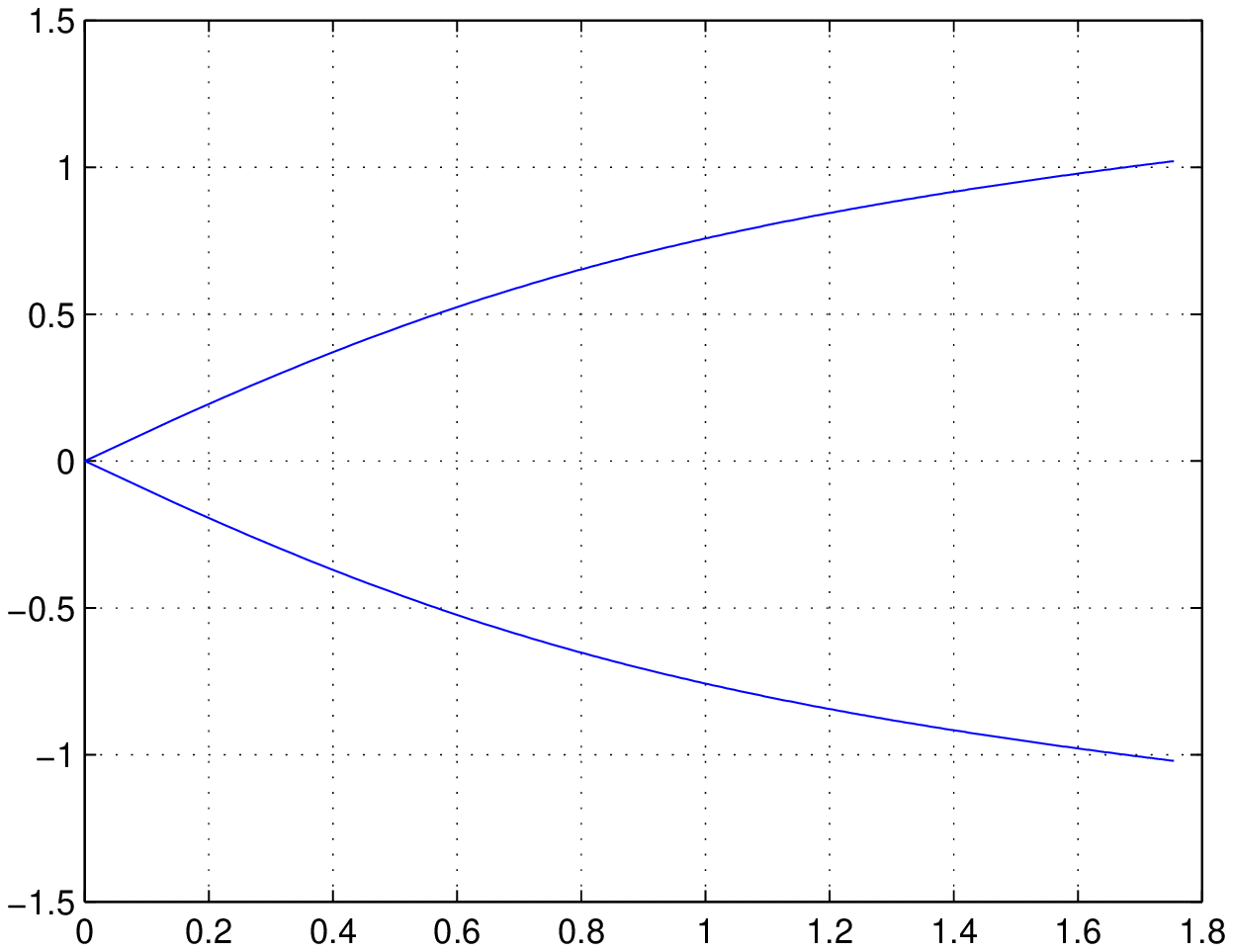}\\
\hspace{-1cm}
\includegraphics[width=6.5cm]{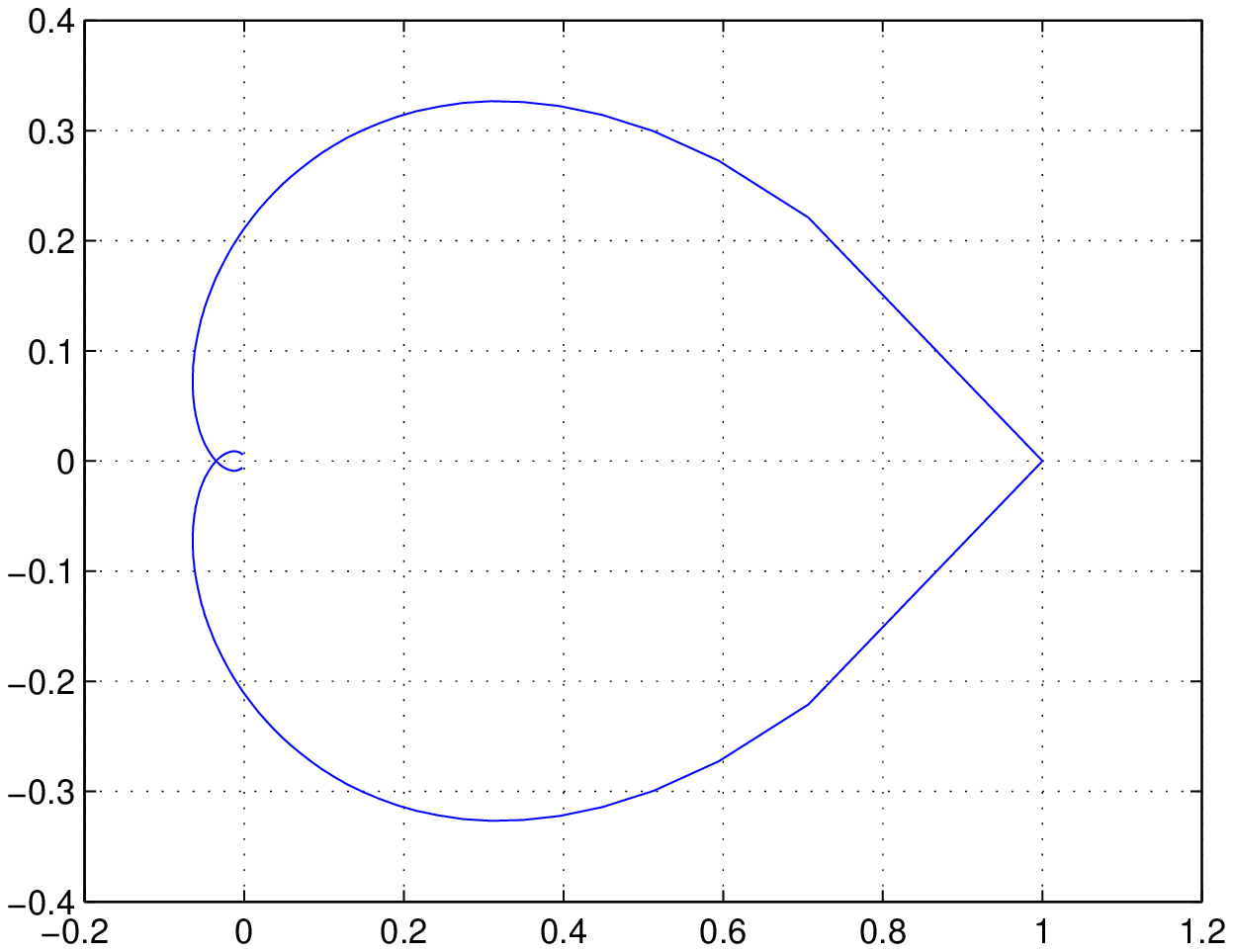}
\includegraphics[width=6.5cm]{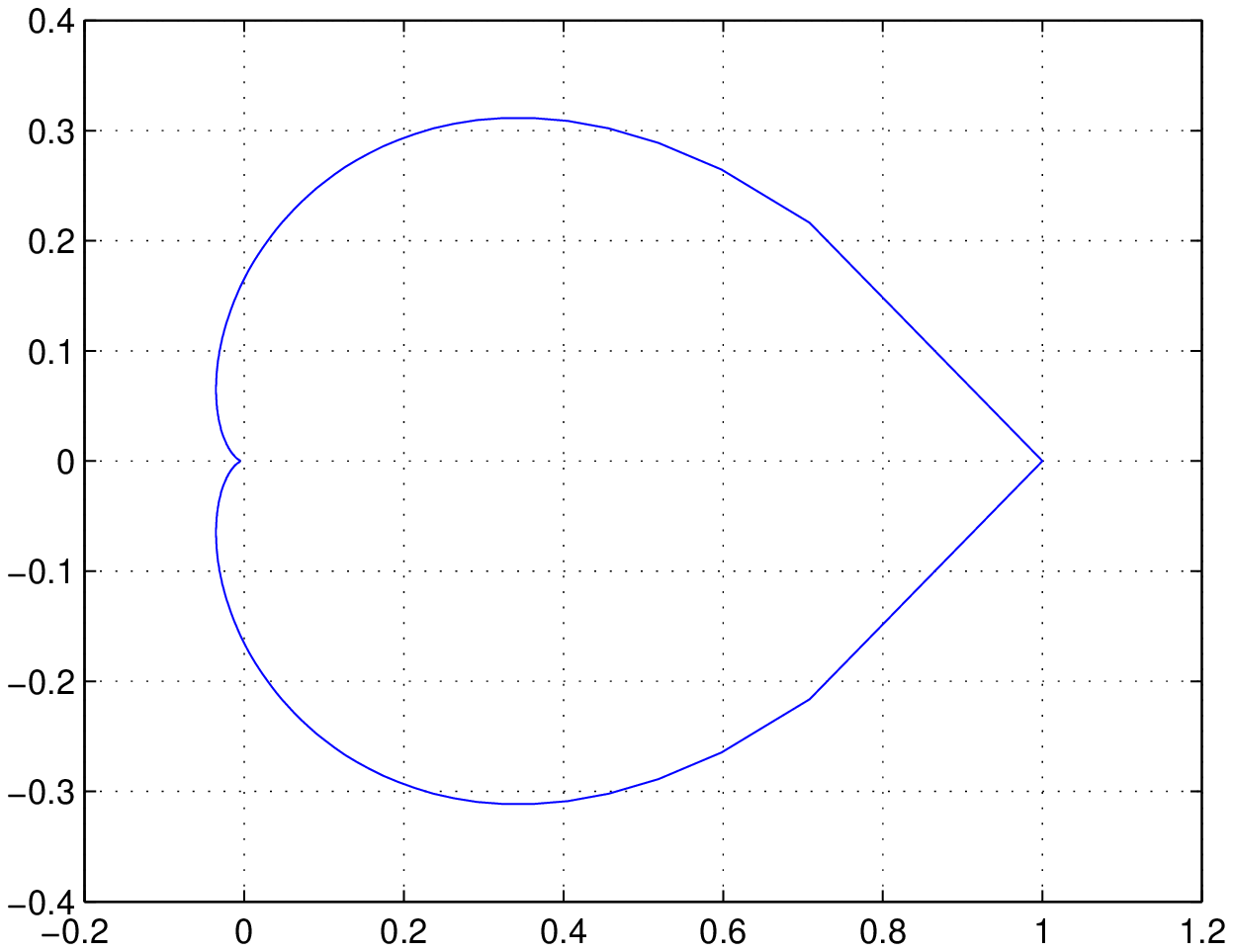}\\
\hspace{-1cm}
\includegraphics[width=6.5cm]{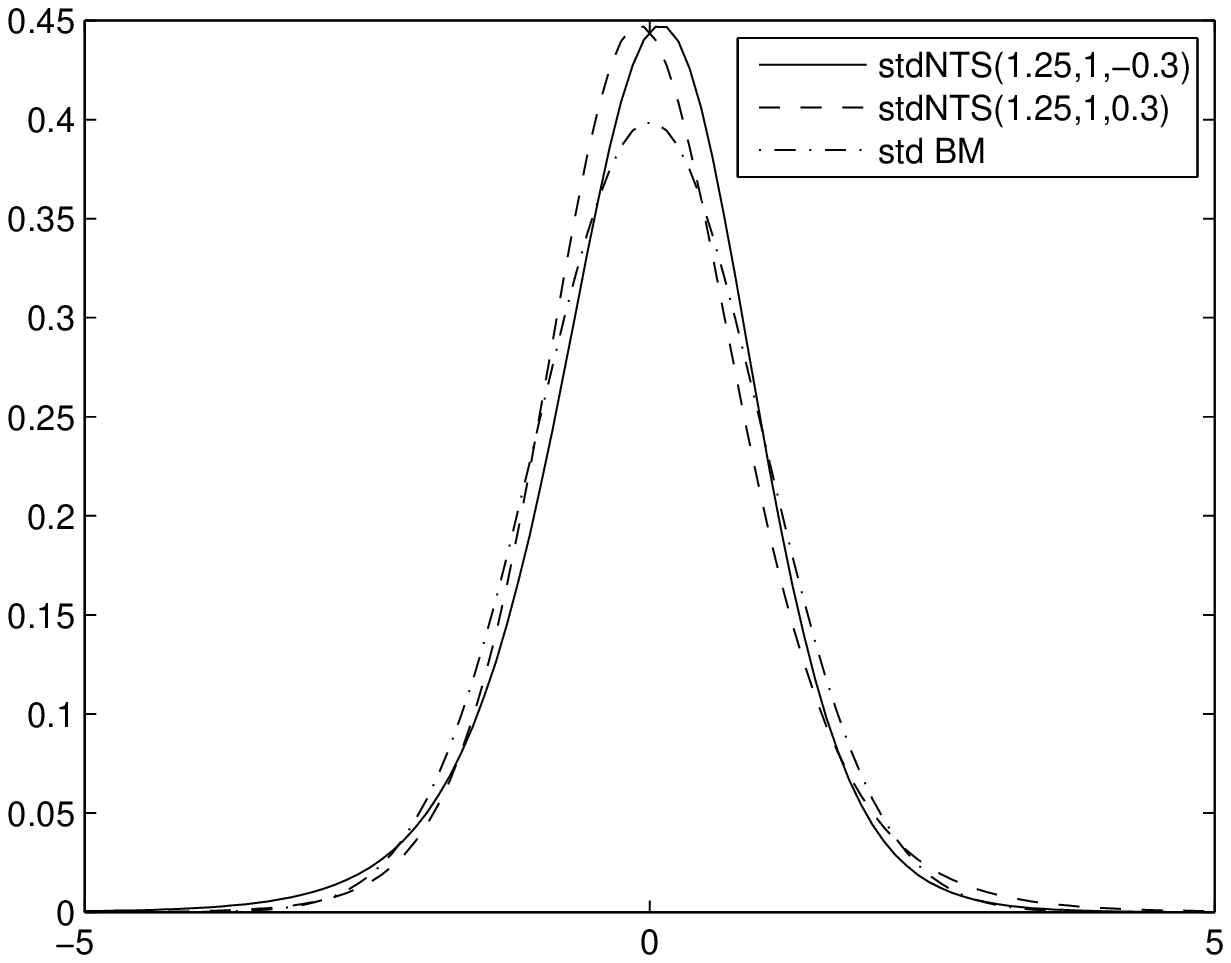}
\includegraphics[width=6.5cm]{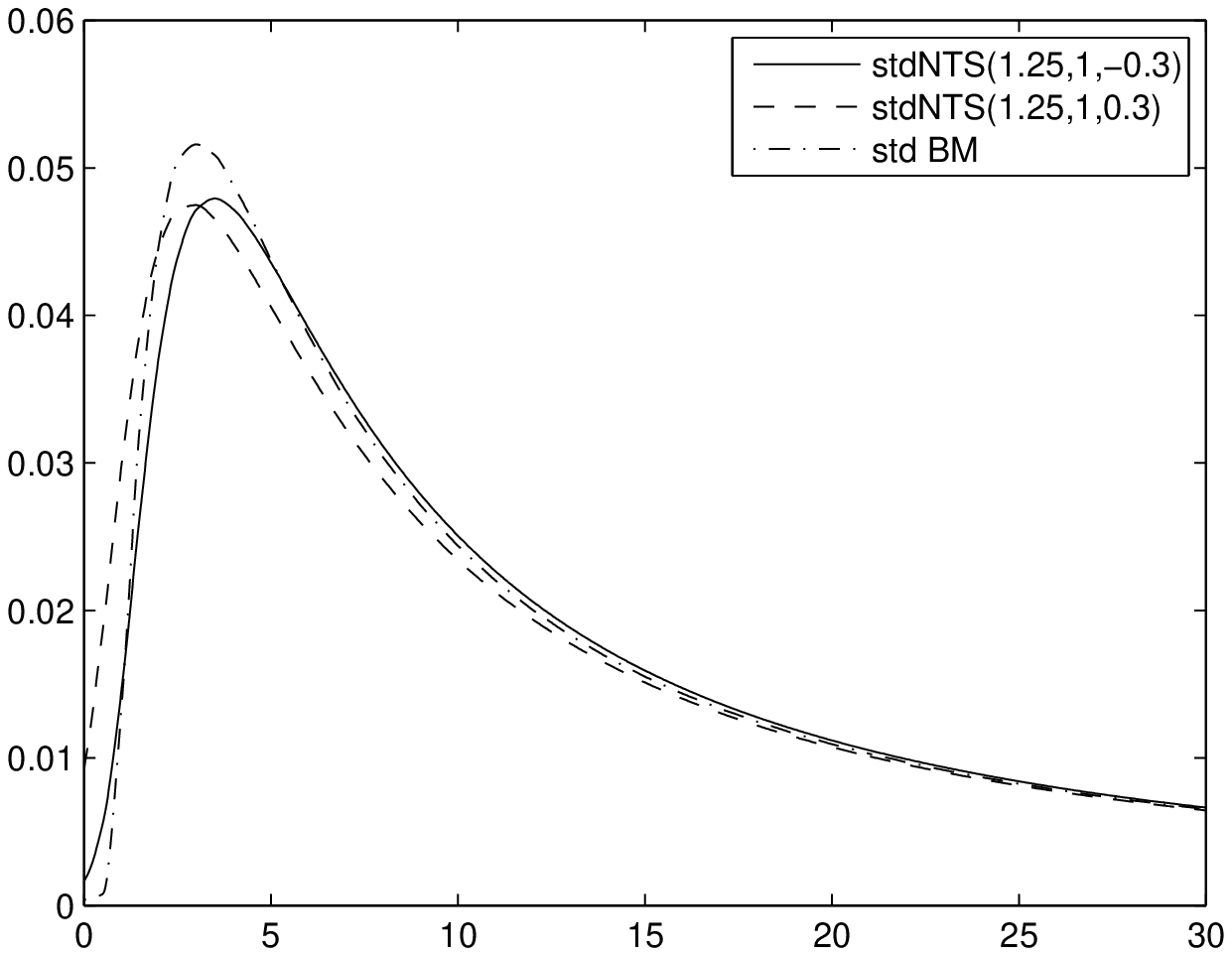}
\end{center}
\caption{\label{fig:firsthittingtime_NTS}Function $\eta(u)$'s, ch.F's and pdf's for first passage time of standard NTS processes for the level $l=3$.  The upper left is the function $\eta(u)$ for stdNTS$(1.25, 1, 0.3)$ and the upper right is for stdNTS$(1.25, 1, -0.3)$. The middle left is the characteristic function for the stdNTS$(1.25, 1, 0.3)$ and the middle right is for stdNTS$(1.25, 1, -0.3)$. Pdf's of standard NTS distributions are on the bottom left. The pdfs of the first passage time of standard NTS processes are on the bottom right.  }
\end{figure}
\clearpage

\begin{figure}[t]
\begin{center}
\hspace{-1cm}
\includegraphics[width=6.5cm]{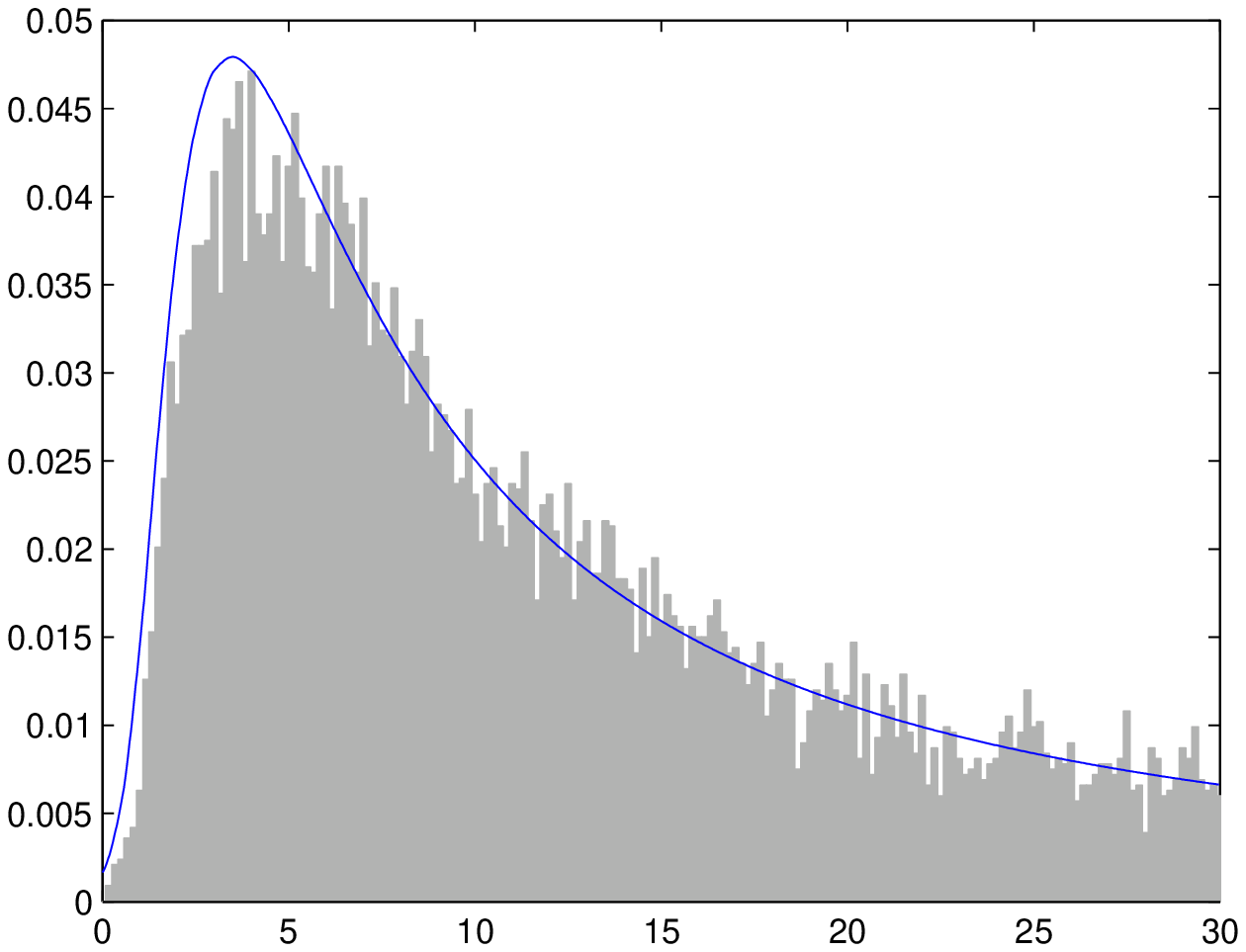}
\includegraphics[width=6.5cm]{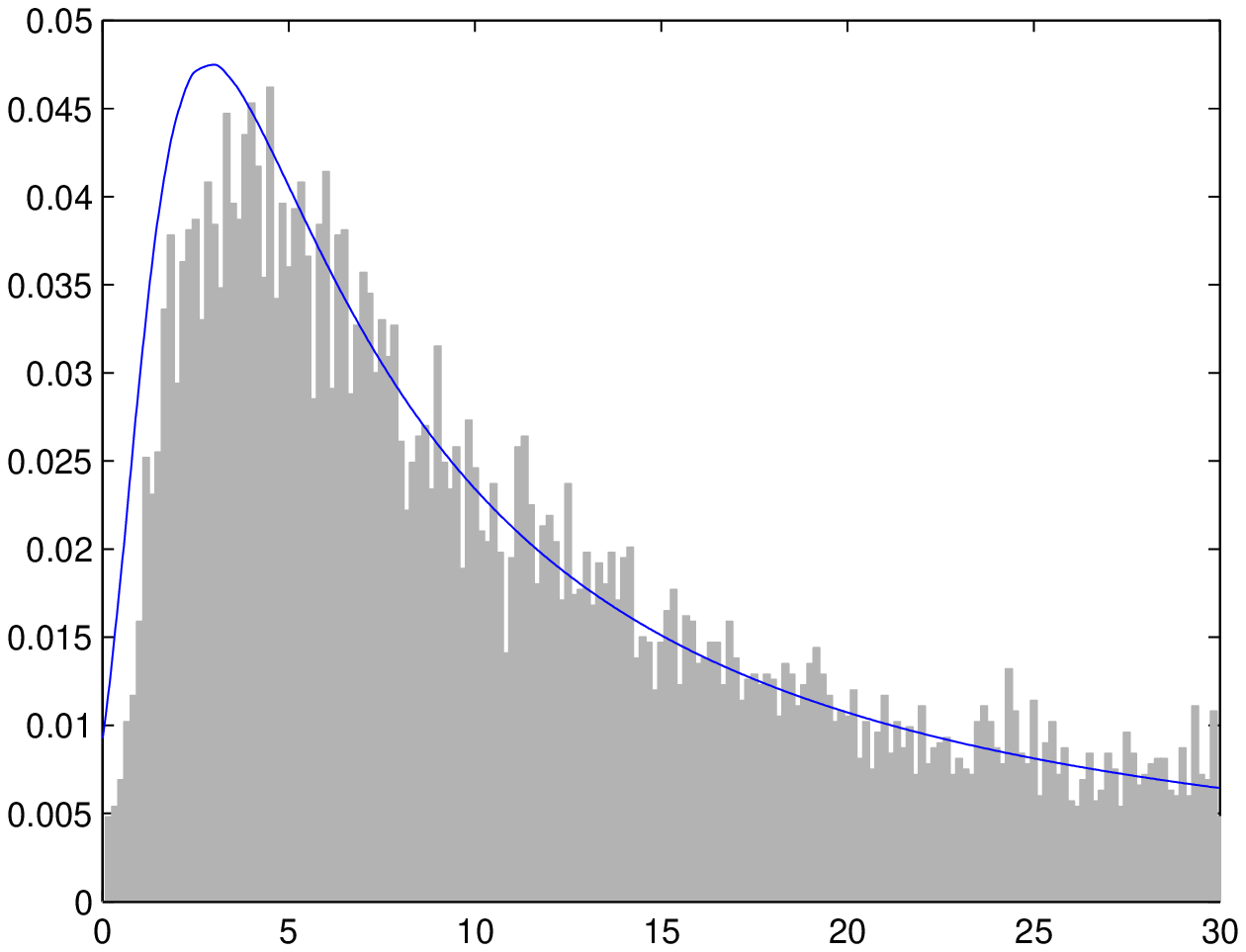}
\end{center}
\caption{\label{fig:simulation_stdNTS_HittingTime_distribution}
Relative histograms for first passage times for stdNTS$(1.25, 1, -0.3)$ simulated sample path (left) and stdNTS$(1.25, 1, 0.3)$ simulated sample path (right). }
\end{figure}
\clearpage

\begin{figure}[t]
\begin{center}
\hspace{-1cm}
\includegraphics[width=6.5cm]{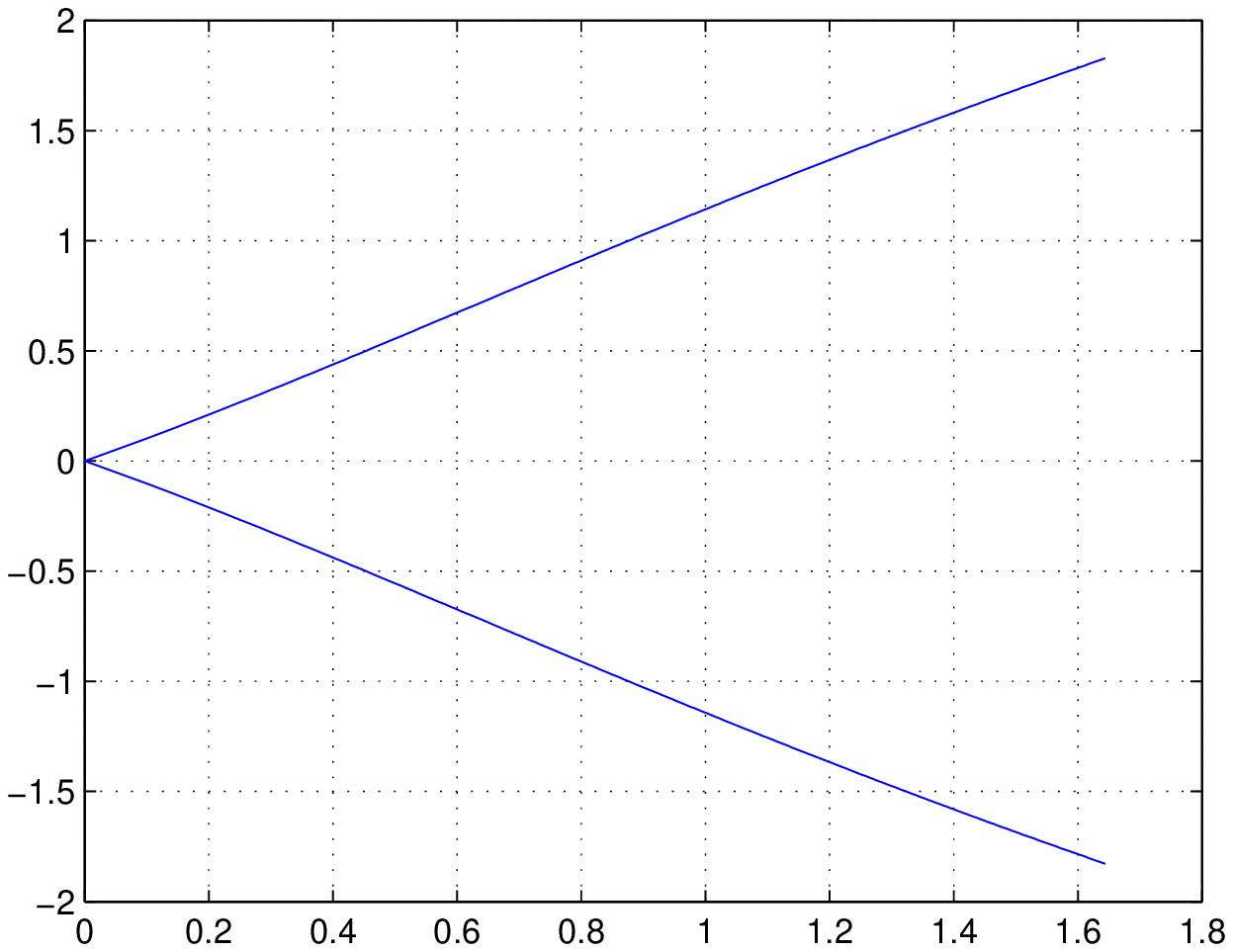}
\includegraphics[width=6.5cm]{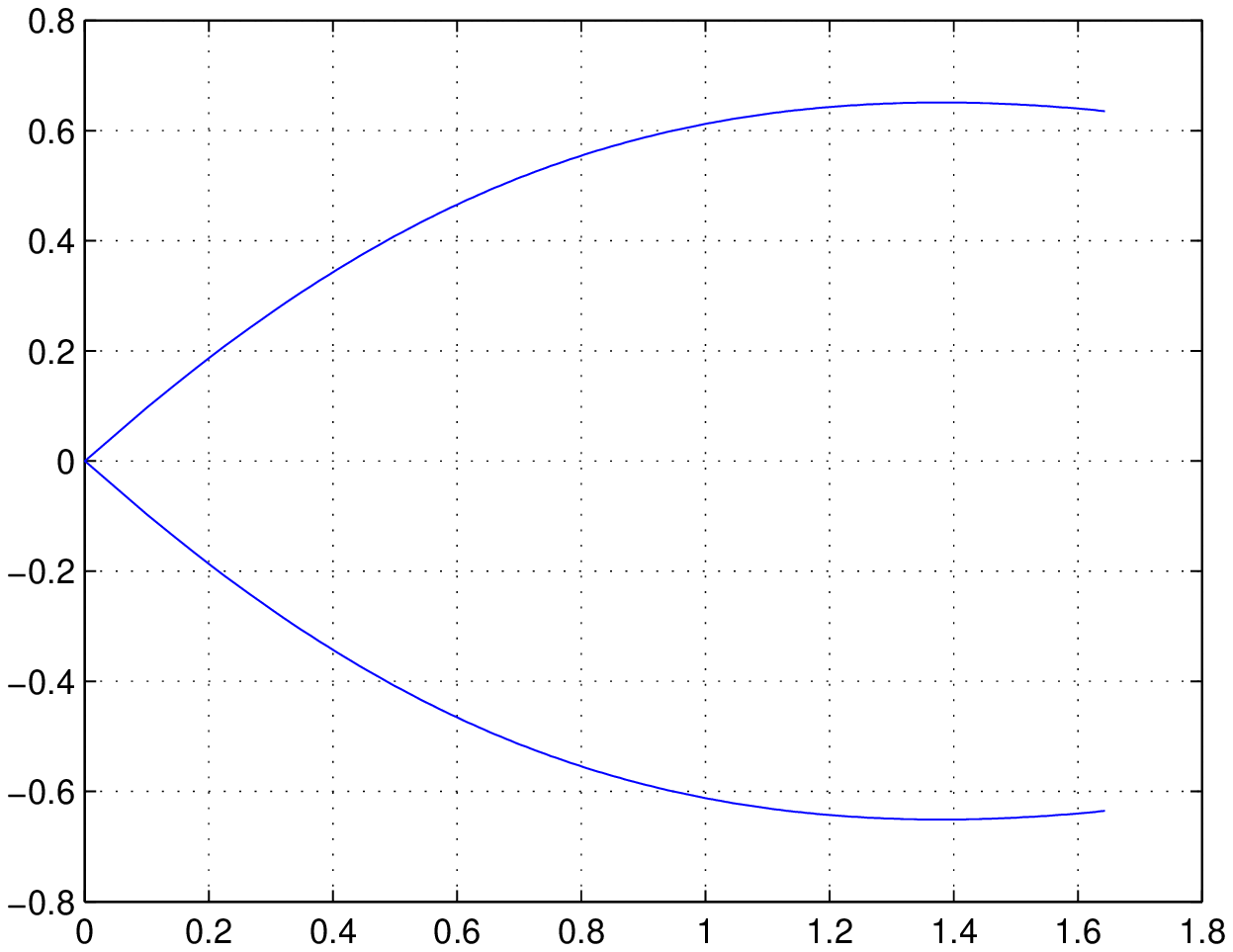}\\
\hspace{-1cm}
\includegraphics[width=6.5cm]{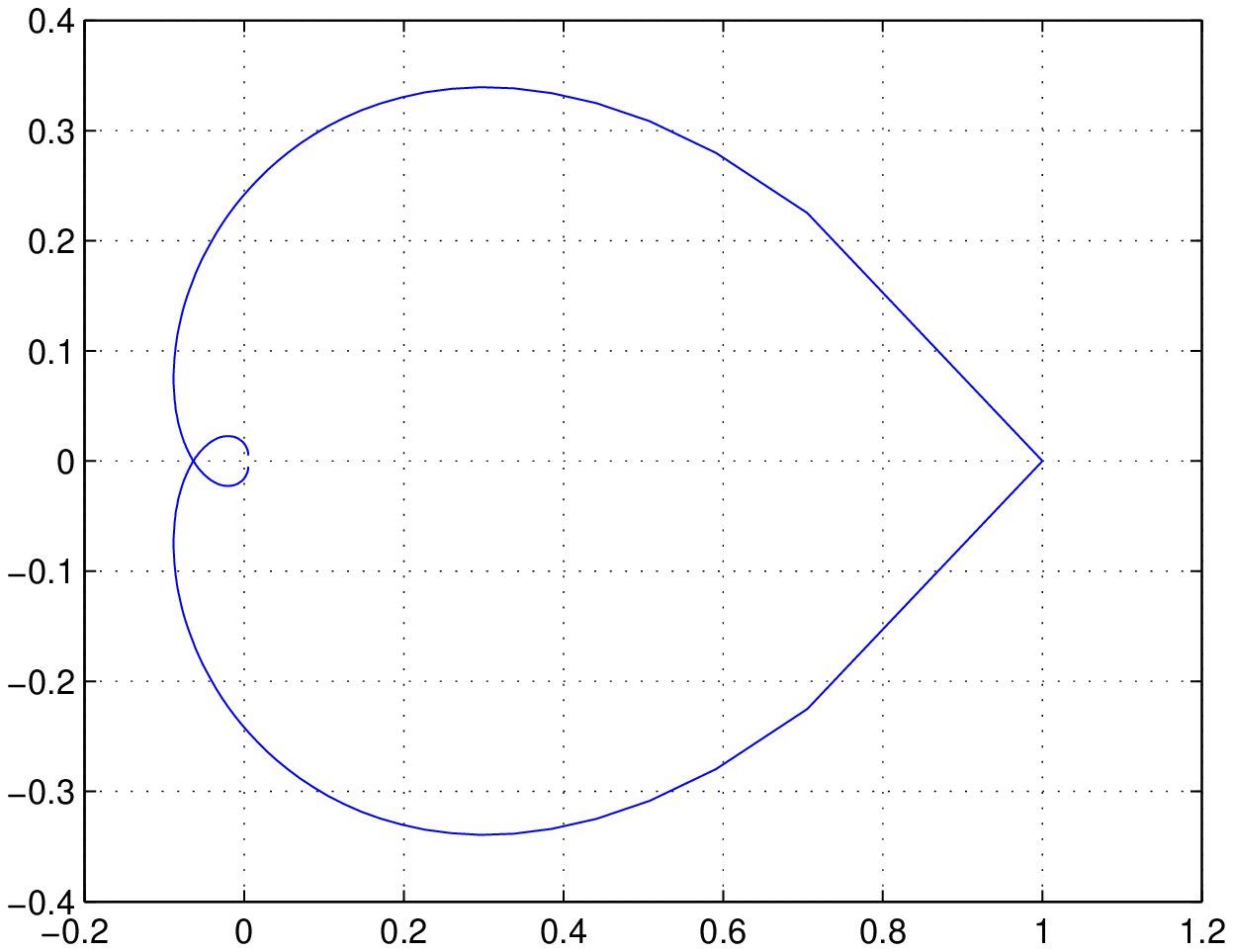}
\includegraphics[width=6.5cm]{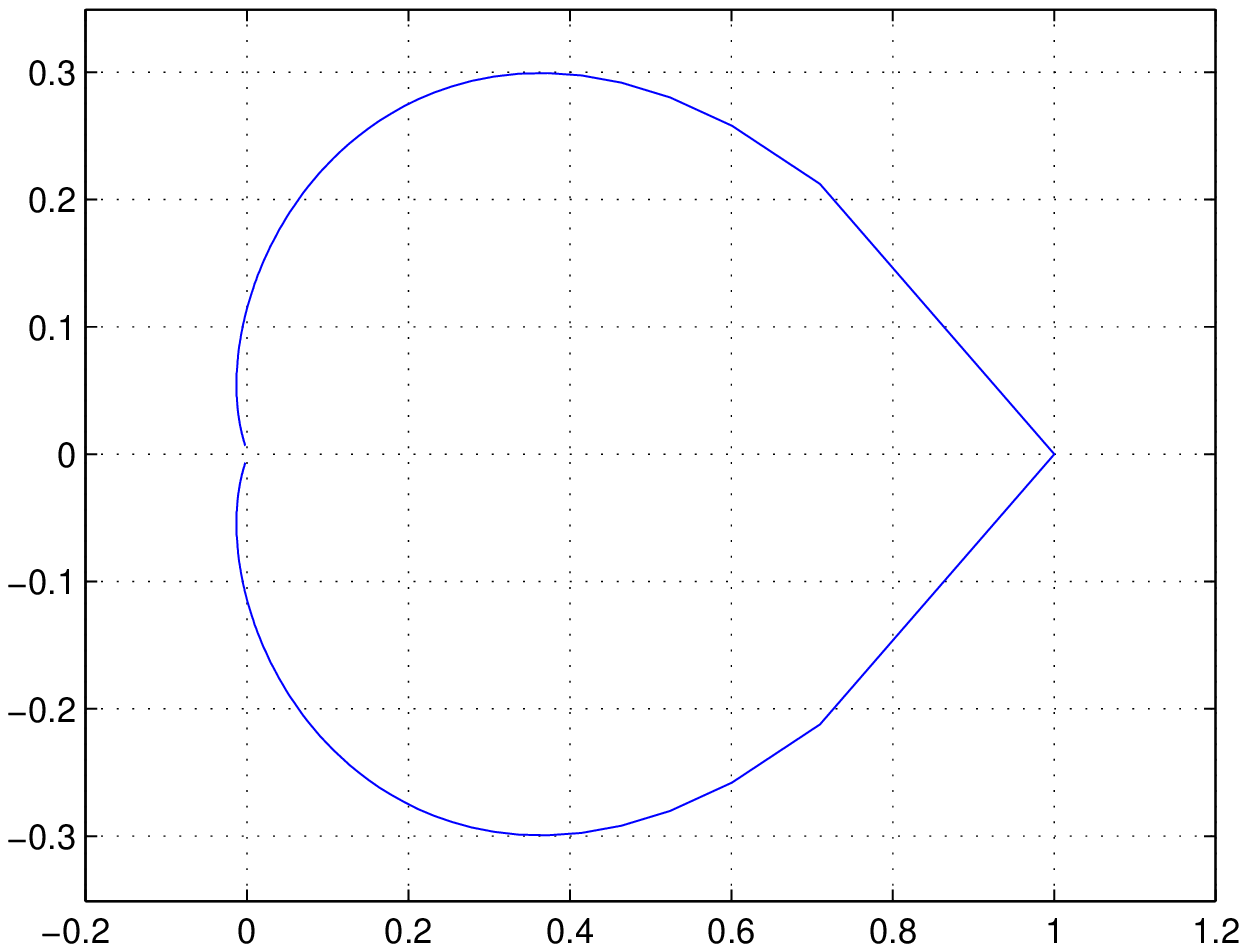}\\
\hspace{-1cm}
\includegraphics[width=6.5cm]{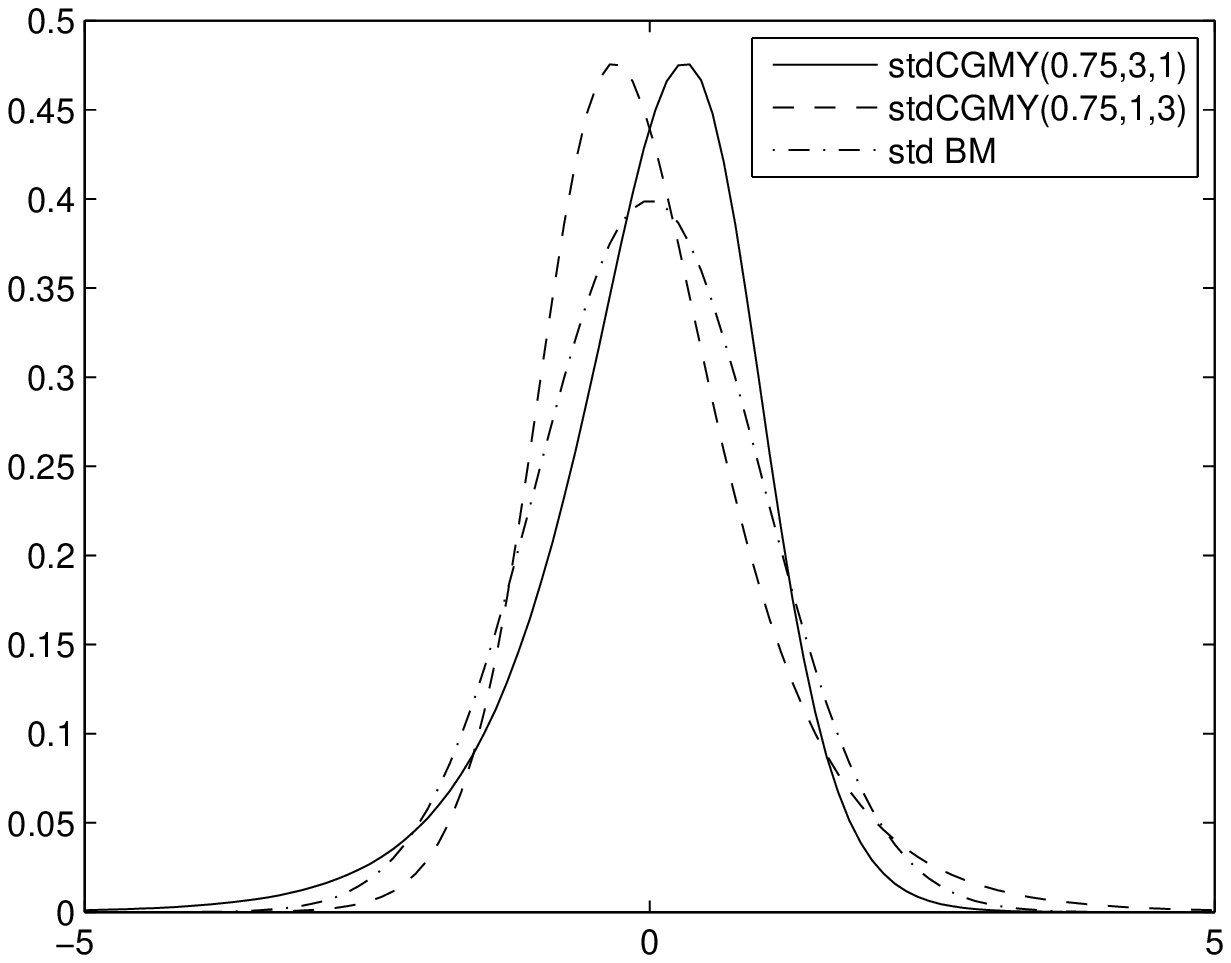}
\includegraphics[width=6.5cm]{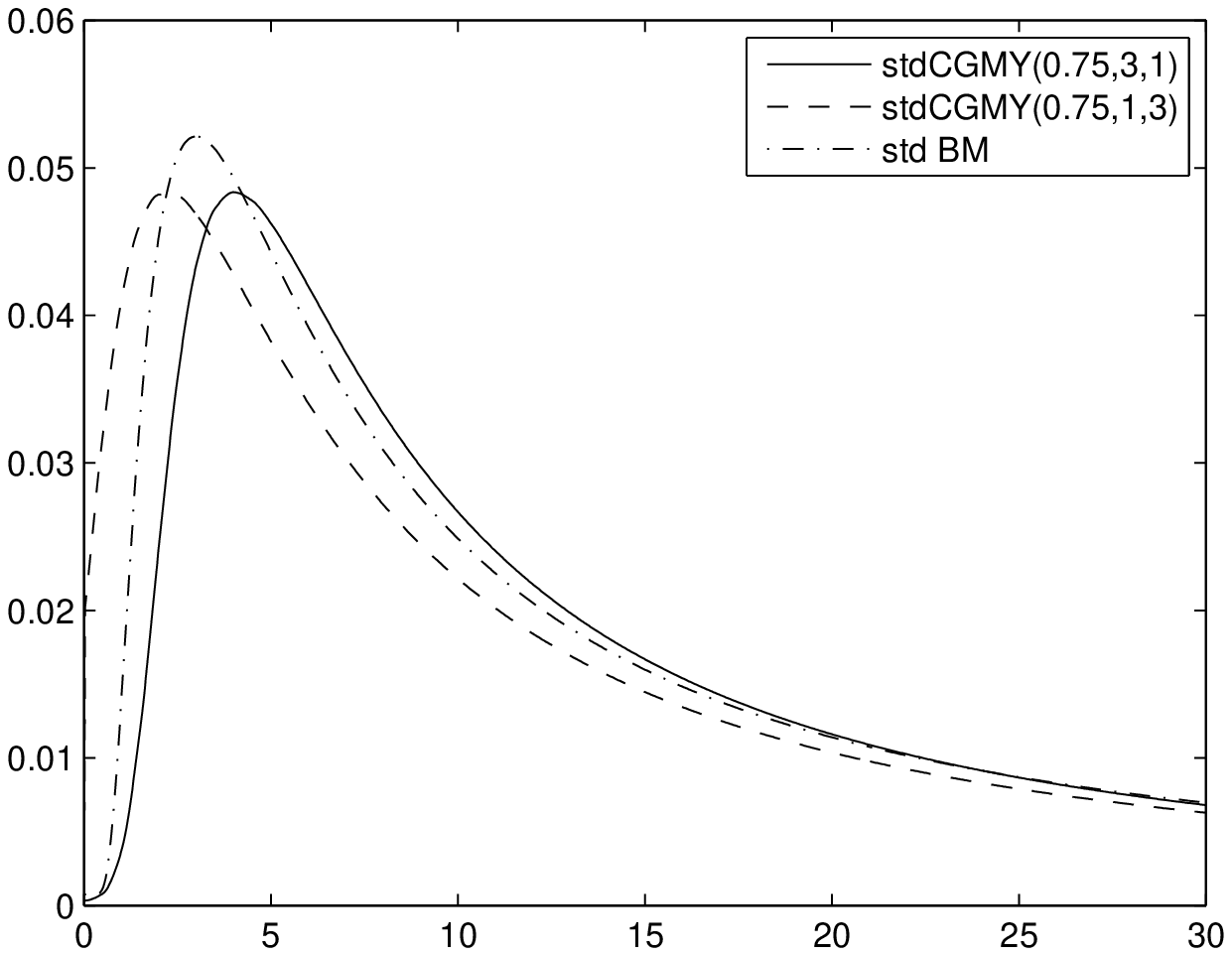}
\end{center}
\caption{\label{fig:firsthittingtime_CGMY}Function $\eta(u)$'s and ch.F's for first passage time of standard CGMY processes. The upper left is the function $\eta(u)$ for stdCGMY$(0.75, 3, 1)$ and the upper right is for stdCGMY$(0.75, 1, 3)$. The middle left is the characteristic function for the stdCGMY$(0.75, 3, 1)$ and the middle right is for stdCGMY$(0.75, 1, 3)$. Pdf's of standard CGMY distributions are on the bottom left. The pdfs of the first passage time of standard CGMY processes are on the bottom right.}
\end{figure}
\clearpage

\begin{figure}[t]
\hspace{-1cm}
\includegraphics[width=7cm]{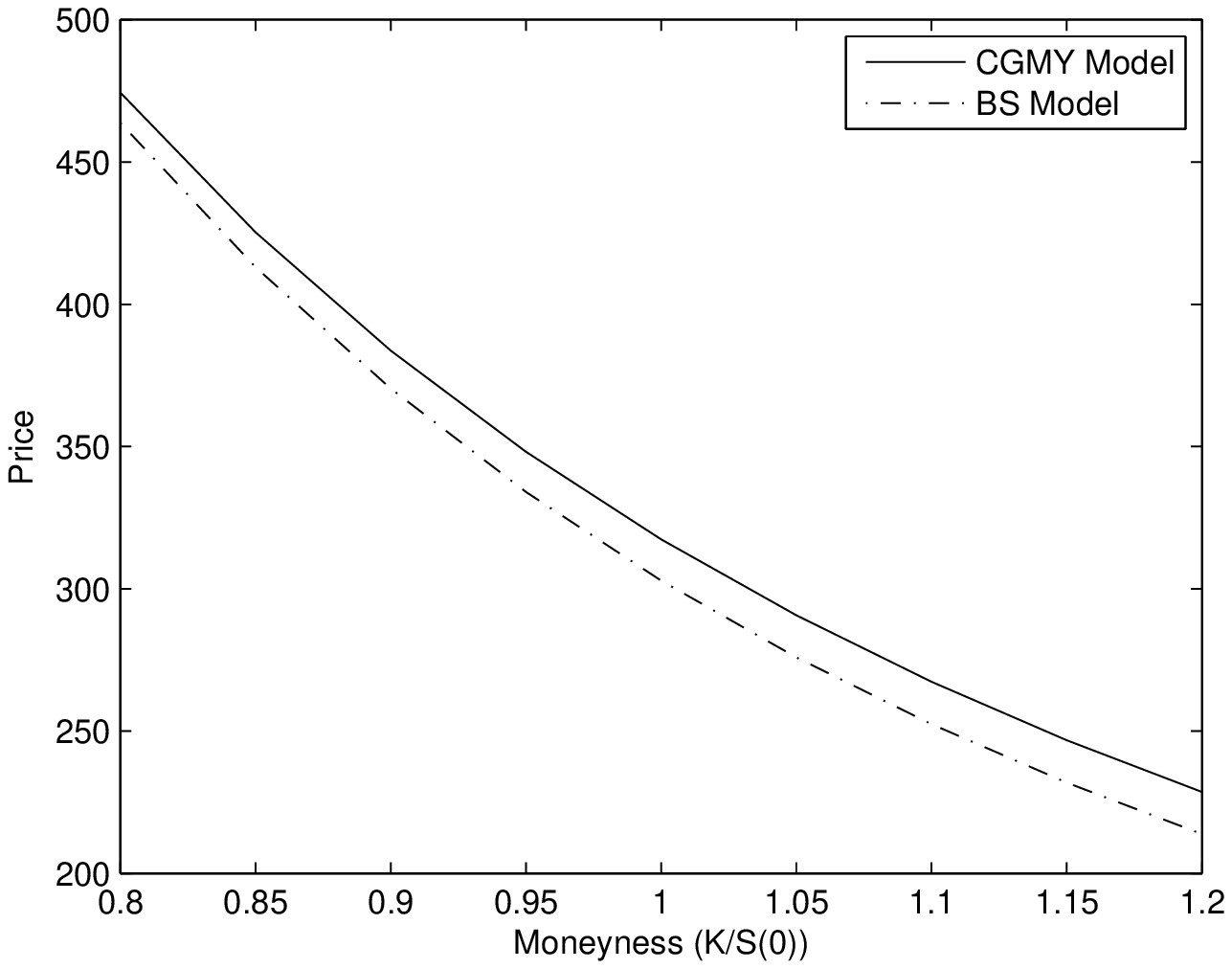}
\includegraphics[width=7cm]{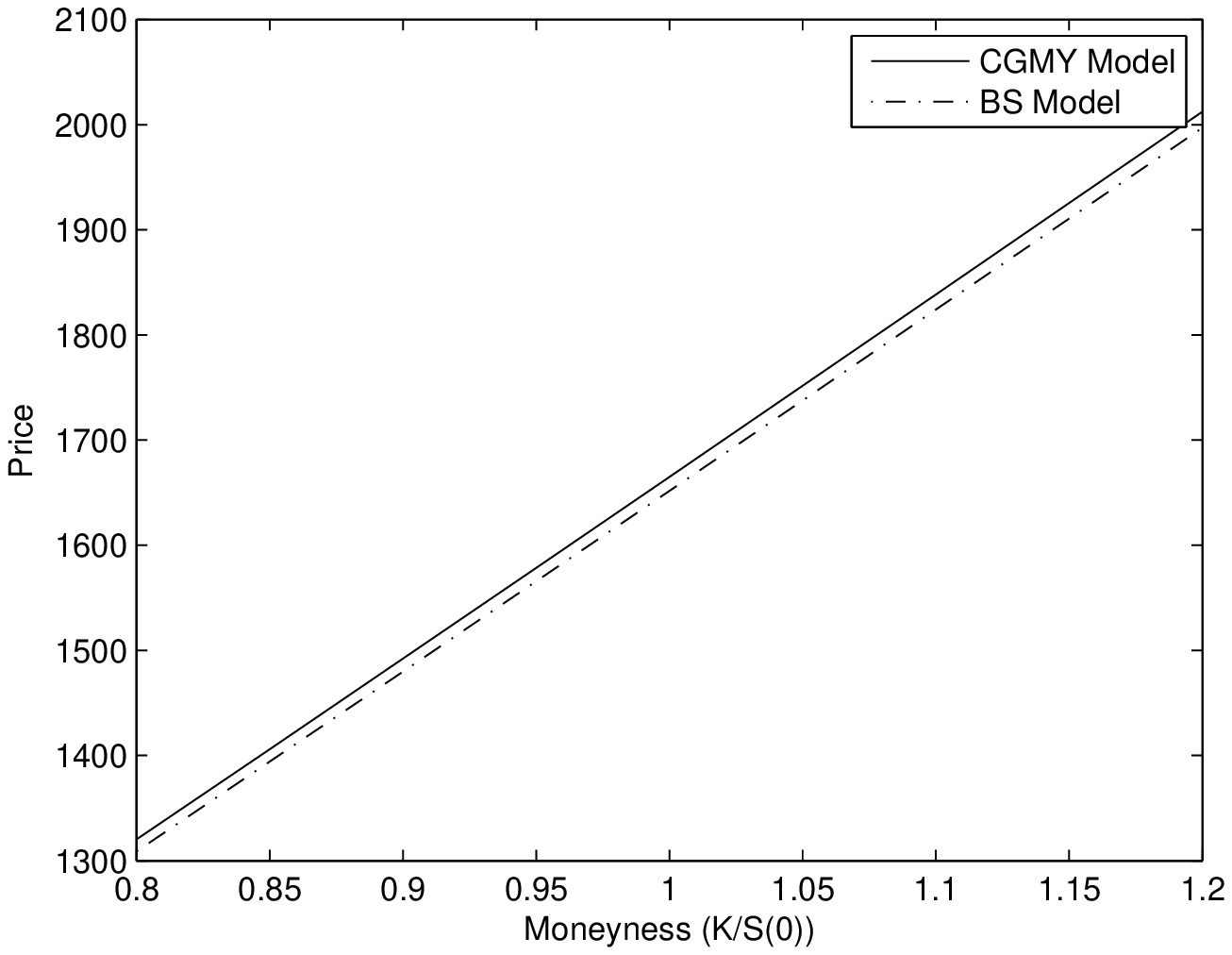}
\caption{\label{fig:PerpetualCallPutPrices}Perpetual call (left) and put (right) prices under the CGMY market model (Solid Curve) and BS model (Dash-dot Curve). }
\end{figure}
\clearpage

\begin{figure}[t]
\hspace{-1cm}
\includegraphics[width=7cm]{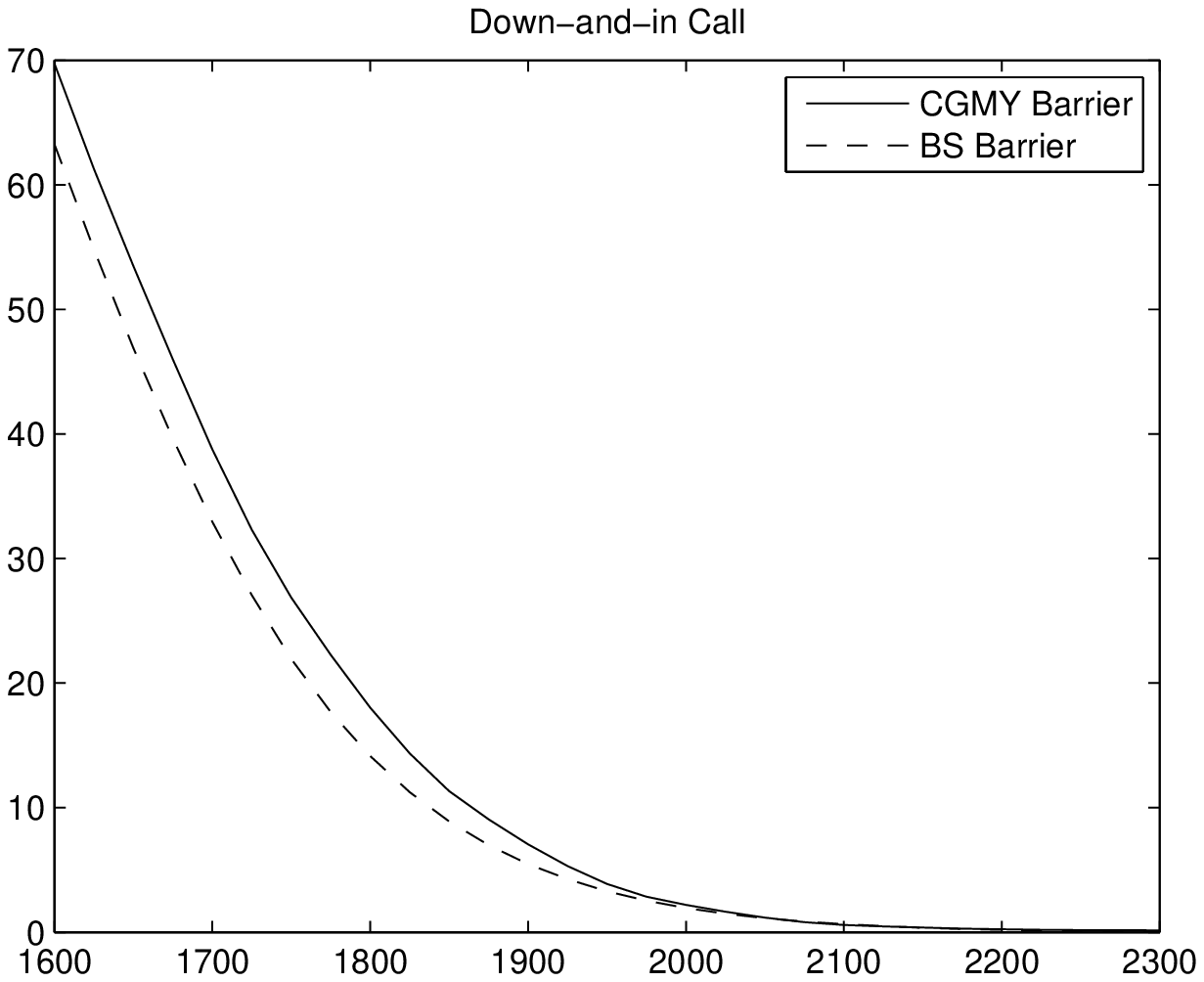}
\includegraphics[width=7cm]{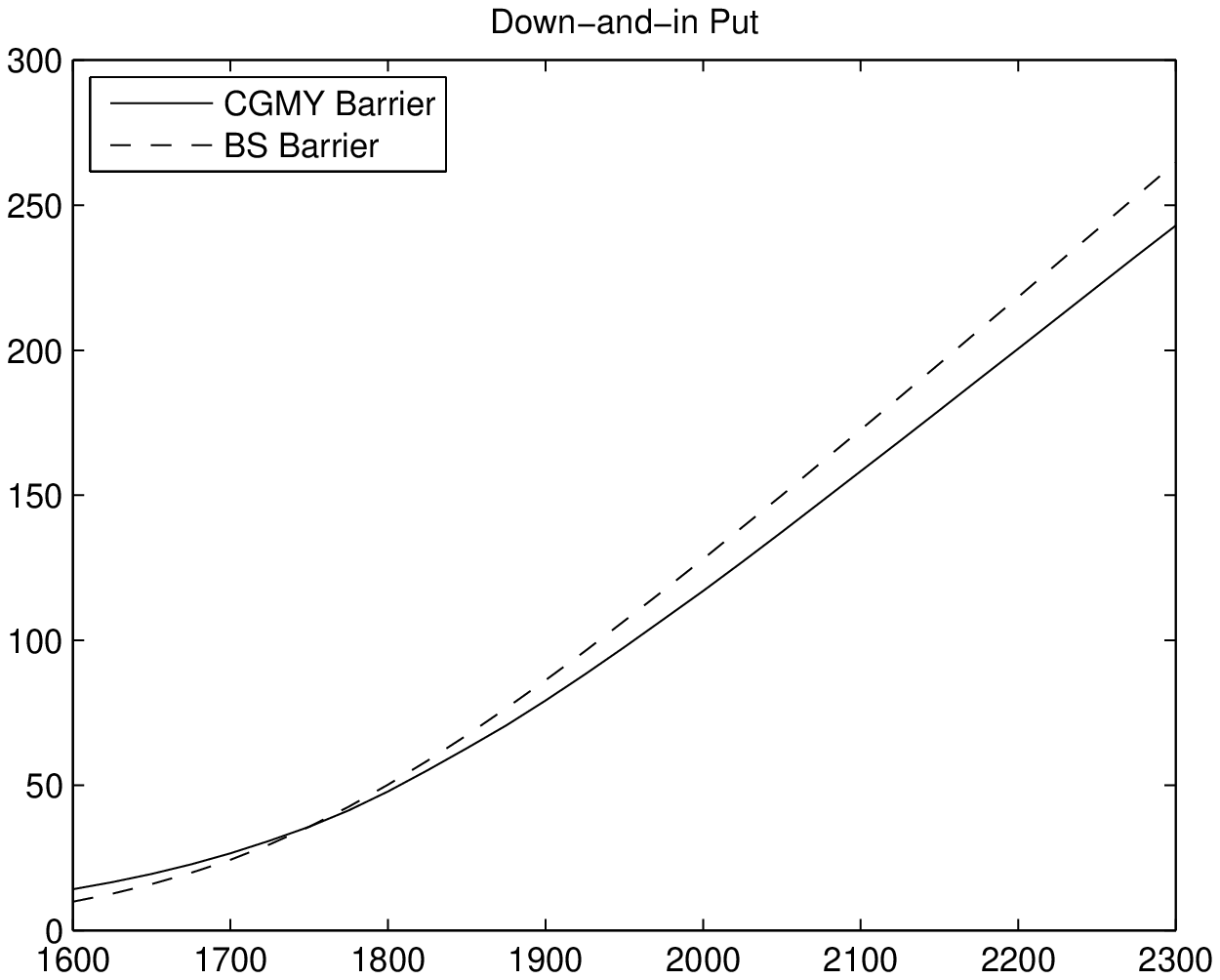}
\caption{\label{fig:DownAndInCallPutPrices}Down-and-in call (left) and put (right) with barrier level 1750 prices, where current underlying index price $S(0) = 1968.89$, and time to maturity is $T$ = 1 year. The solid curves are call/put prices of CGMY model, and the dashed curves are of BS model.}
\end{figure}

\begin{figure}[t]
\hspace{-1cm}
\includegraphics[width=7cm]{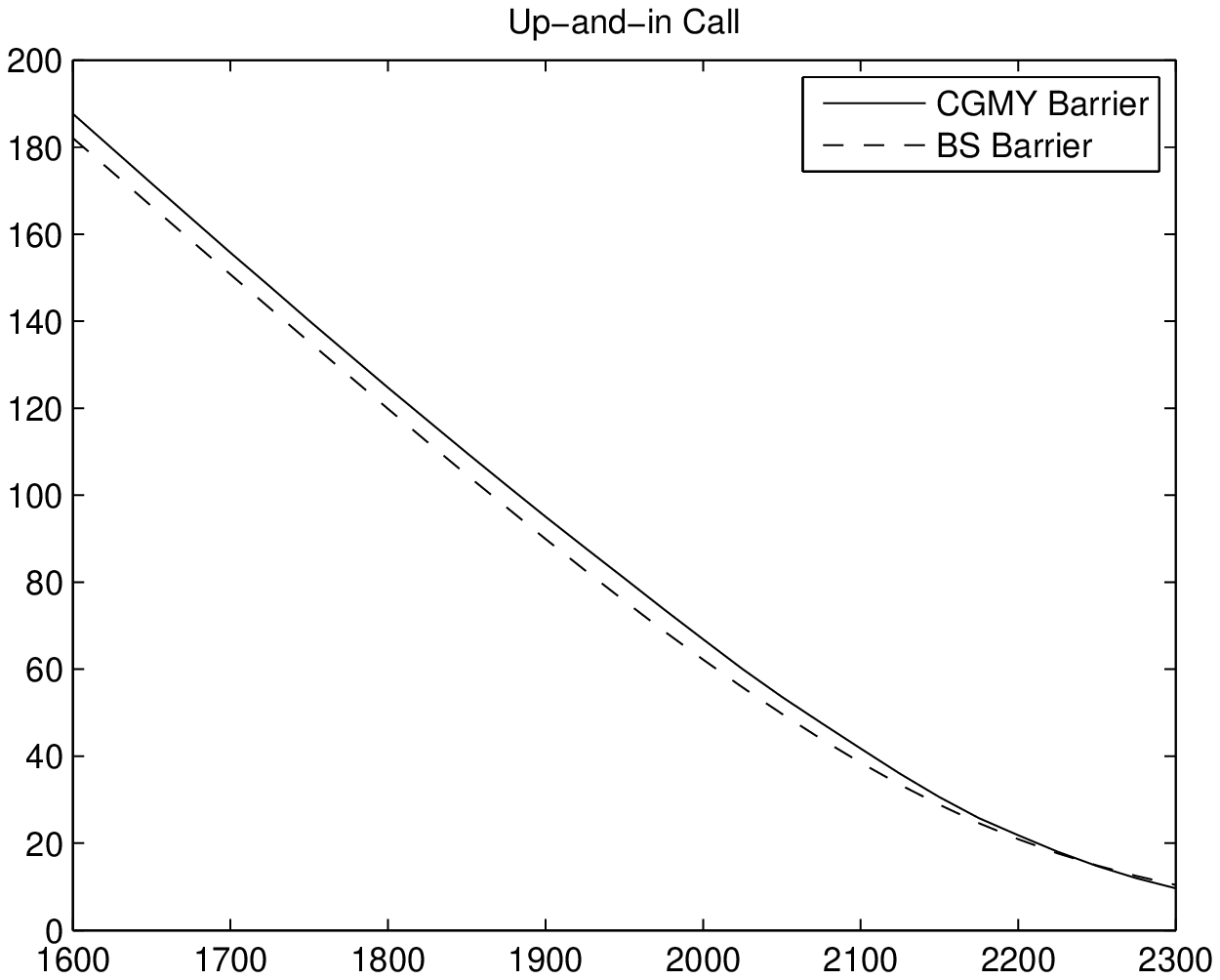}
\includegraphics[width=7cm]{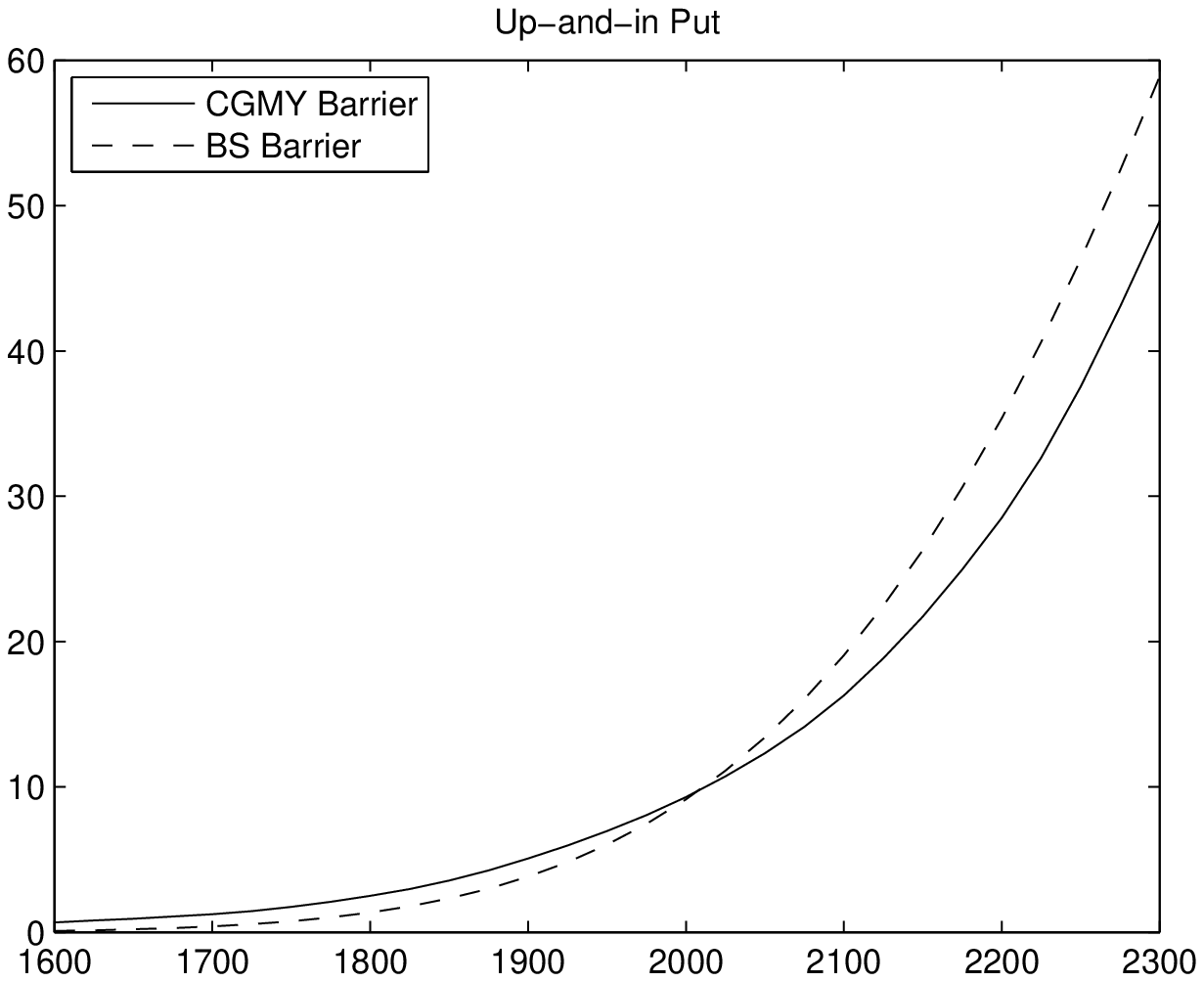}
\caption{\label{fig:UpAndInCallPutPrices}Up-and-in call (left) and put (right) prices where the barrier level is $B = 2200$, current underlying index price $S(0) = 1968.89$, and time to maturity is $T$ = 1 year. The solid curves are call/put prices of CGMY model, and the dashed curves are of BS model.}
\end{figure}

\clearpage
\section*{Appendix}
As appendix, we discuss perpetual American option pricing and barrier option pricing under the \levy~market model.
\subsection*{Perpetual American Option}
The perpetual call and put option price on \levy~model can be obtained by the martingale method introduced in \cite{GerberShiu:1994}. In this section, we just follow the martingale method for the \levy~market price model.
We consider a perpetual American call option with strike price $K$. If the option holder exercise the call at a time $T$, then the holder obtain $(S(T)-K)^+$ where $x^+ = \max\{0,x\}$. 
Let $L$ be a real number with $L\ge K$. The holder will exercise the call when the asset price first become greater than or equal to the level $L$. We define the first passage time
\[
\tau(l) = \inf\{t\ge0|S(t)\ge L\} = \inf\{t\ge0|X(t)\ge l\}, 
\]
where $l=\log(L/S(0))>0$. Then the current value of the perpetual American call is
\[
C_\text{perpetual}=\max_{L\ge K} E[e^{-r\tau(l)}(S(\tau(l))-K)^+].
\]
Let
\[
C(L) = E[e^{-r\tau(l)}(S(\tau(l))-K)^+]=(L-K)E[e^{-r\tau(l)}]
\]
which is the Laplace transform of $\tau(l)$. Applying Lemma \ref{Lemma:ChfTau}, we can obtain the Laplace transform as 
\[
E\left[e^{-r\tau(l)}\right] = \phi_{\tau(l)}(ir) =e^{-l\eta^+(ir)},
\]
where $\eta^+(ir)$ is the value satisfying \eqref{eq:martingalecondition2} and \eqref{eq:chf_tau} for $l>0$ and $u=ir$.
Hence we have
\[
C(L) = (L-K)e^{-l\eta^+(ir)}=(L-K)\left(\frac{S(0)}{L}\right)^{\eta^+(ir)}.
\]
By solving
\[
\frac{\partial C}{\partial L}(L^+) = 0,
\]
we find the optimal value $L^+$ 
\[
L^+ = \frac{\eta^+(ir)K}{\eta^+(ir)-1}.
\]
Hence, we obtain the maximum value
\[
C(L^+)=\frac{K}{\eta^+(ir)-1}\left(\frac{S(0)(\eta^+(ir)-1)}{K\eta^+(ir)}\right)^{\eta^+(ir)}.
\]
If $L^+< S(0)$ then the call is immediately exercised so we have price $S(0)-K$. Therefore the perpetual call price is equal to 
\[
C_\text{perpetual}=\begin{cases}
\displaystyle \frac{K}{\eta^+(ir)-1}\left(\frac{S(0)(\eta^+(ir)-1)}{K\eta^+(ir)}\right)^{\eta^+(ir)} &\text{ if } S(0)\le L^+ \\
S(0)-K &\text{ if } S(0)>L^+
\end{cases}.
\]

We consider a perpetual American put option with strike price $K$. If the option holder exercise the put at a time $T$, then the holder obtain $(K-S(T))^+$. 
Let $L$ be a real number with $0<L\le K$.
The holder will exercise the put when the asset price first become less than or equal to the level $L$. We define the first passage time
\[
\tau(l) = \inf\{t\ge0|S(t)\le L\} = \inf\{t\ge0|X(t)\le l\}  
\]
where $l=\log(L/S(0))<0$. Then the current value of the put is
\[
P_\text{perpetual} = \max_{0<L\le K} E[e^{-r\tau(l)}(K-S(\tau(l)))^+]
\]
which is the Laplace transform of $\tau(l)$.
For the same arguments as the call option case, we find the optimal value $L^-$ 
\[
L^- = \frac{\eta^-(ir)K}{\eta^-(ir)-1},
\]
where  $\eta^-(ir)$ is the value satisfying \eqref{eq:martingalecondition2} and \eqref{eq:chf_tau} for $l<0$ and $u=ir$.
Hence the perpetual put price is equal to 
\[
P_\text{perpetual}=\begin{cases}
\displaystyle \frac{K}{1-\eta^-(ir)}\left(\frac{S(0)(\eta^-(ir)-1)}{K\eta^-(ir)}\right)^{\eta^-(ir)} &\text{ if } S(0)\ge L^- \\
K-S(0) &\text{ if } S(0)<L^-
\end{cases}.
\]

\subsection*{Barrier Option}
Let $Pi$ be the payoff function of European options. For example, the European call and put options with strike price $K$ are given by $\Pi(S(T))=(S(T)-K)^+$ and $\Pi(S(T))=(K-S(T))^+$, respectively. 
The knock-in barrier option with the barrier level $B$, time to maturity $T$ is priced by the following equation
\[
V_{i}=e^{-rT}E\left[\Pi(S(T))1_{\tau(l)<T}\right]
\]
where $l = \log(B/S(0))$. Note that $l<0$ for the down-and-in barrier option and $l>0$ for the up-and-in barrier option. 
Since we have 
\[
\Pi(S(T)) = \Pi(S(T))1_{\tau(l)<T}+\Pi(S(T))1_{\tau(l)\ge T},
\]
the knock-out barrier option price can be obtained by the following equation
\[
V_{o}=e^{-rT}E\left[\Pi(S(T))1_{\tau(l)\ge T}\right] = e^{-rT}\left(E\left[\Pi(S(T))\right] - E\left[\Pi(S(T))1_{\tau(l)< T}\right] \right) = V - V_{i},
\]
where $V = e^{-rT}E\left[\Pi(S(T))\right]$. Note that $l<0$ for the down-and-out barrier option and $l>0$ for the up-and-out barrier option.

\noindent\textit{Case 1:} $\Pi(S(T)) = \Pi(S(T))1_{\tau(l)<T}$\\
If $\Pi(S(T)) = \Pi(S(T))1_{\tau(l)<T}$ then the barrier option price is the same as option prices without the barrier:
\[
V_{i}=e^{-rT}E\left[\Pi(S(T))1_{\tau(l)<T}\right] =e^{-rT}E\left[\Pi(S(T))\right].
\]
For example (1) up-and-in call option with $K>B$, we have $(S(T)-K)^+=(S(T)-K)^+1_{\tau(l)<T}$, and (2) down-and-in put option with $K<B$, we have $(K-S(T))^+=(K-S(T))^+1_{\tau(l)<T}$. 

\noindent\textit{Case 2:} $\Pi(S(T)) \neq \Pi(S(T))1_{\tau(l)<T}$\\
If $\Pi(S(T))\neq \Pi(S(T))1_{\tau(l)<T}$, we have
\begin{align*}
V_{i}&=e^{-rT}E\left[\Pi(S(T))1_{\tau(l)<T}\right] \\
&=e^{-rT}E\left[E\left[\Pi(S(0)e^{X(T)-X(\tau(l))+X(\tau(l))})1_{\tau(l)<T}|\tau(l)\right]\right] \\
&=e^{-rT}E\left[E\left[\Pi(S(0)e^le^{X(T-\tau(l))})1_{\tau(l)<T}|\tau(l)\right]\right] \\
&=e^{-rT}\int_0^T E\left[\Pi(S(0)e^le^{X(T-t)})1_{t<T}\right]f_{\tau(l)}(t)dt
\end{align*}
where $f_{\tau(l)}$ is the pdf of $\tau(l)$.
Since we have $S(0)e^l = B$ and
\[
f_{\tau(l)}(t)=\frac{1}{2\pi}\int_{-\infty}^\infty e^{-ivt}\phi_{\tau(l)}(v)dv,
\]
the $c_{i}$ becomes
\begin{align}
\nonumber V_{i}&=\frac{e^{-rT}}{2\pi}\int_0^T E\left[\Pi(Be^{X(T-t)})\right]\int_{-\infty}^\infty e^{-ivt}\phi_{\tau(l)}(v)dv\, dt \\
&=\frac{e^{-rT}}{2\pi}\int_{-\infty}^\infty \int_0^T E\left[\Pi(Be^{X(T-t)})\right] e^{-ivt}dt\phi_{\tau(l)}(v)dv. \label{beforputcallparity}
\end{align}
By European option pricing formula using Fourier transform (See \cite{CarrMadan:1999}, \cite{Lewis:2001} and \cite{RachevKimBianchiFabozzi:2011a}),
we have
\[
E\left[\Pi(Be^{X(T-t)})\right]=\frac{1}{2\pi}\int_{-\infty}^\infty B^{i(u+i\rho)}e^{(T-t)\psi_X(u+i\rho)}\hat\Pi(u+i\rho)du,
\]
where $\hat\Pi(z)=\int_{-\infty}^\infty e^{-izx}\Pi(e^x)dx$ for complex number $z$ and $\rho$ is a real constant such that $\psi_X(u+i\rho)$ and $\hat\Pi(u+i\rho)$ are well defined for all $u\in\R$.  Hence we have
\begin{align*}
V_{i}&=\frac{e^{-rT}}{2\pi}\int_{-\infty}^\infty \int_0^T \frac{1}{2\pi}\int_{-\infty}^\infty B^{i(u+i\rho)}e^{(T-t)\psi_X(u+i\rho)}\hat\Pi(u+i\rho)du\, e^{-ivt}dt\, \phi_{\tau(l)}(v)dv\\
&=\frac{e^{-rT}}{(2\pi)^2}\int_{-\infty}^\infty \int_{-\infty}^\infty B^{i(u+i\rho)}\hat\Pi(u+i\rho)\int_0^T e^{(T-t)\psi_X(u+i\rho)} e^{-ivt}dt\, du\, \phi_{\tau(l)}(v)dv\\
&=\frac{e^{-rT}}{(2\pi)^2}\int_{-\infty}^\infty B^{i(u+i\rho)}\hat\Pi(u+i\rho)\int_{-\infty}^\infty \frac{e^{T\psi_X(u+i\rho)}-e^{-ivT}}{\psi_X(u+i\rho)+iv}\phi_{\tau(l)}(v)dv\, du\\
\end{align*}
Let 
\[
H(u) = \int_{-\infty}^\infty \frac{e^{T\psi_X(u+i\rho)}-e^{-ivT}}{\psi_X(u+i\rho)+iv}\phi_{\tau(l)}(v)dv
\]
then 
\[
V_i=\frac{e^{-rT}}{(2\pi)^2}\int_{-\infty}^\infty B^{i(u+i\rho)}\hat\Pi(u+i\rho)H(u) du.
\]

\noindent\textit{European call and put options}\\
For the call option payoff $\Pi(S(T))=(S(T)-K)^+$, we have
\[
\hat\Pi(u+i\rho) = \int_{\log K}^\infty e^{-i(u+i\rho)x}(e^x-K)dx = \frac{K^{\rho+1-iu}}{(\rho-iu)(\rho+1-iu)}, ~~~\rho<-1
\]
and for the put option payoff $\Pi(S(T))=(K-S(T))^+$, we have
\[
\hat\Pi(u+i\rho) = \int^{\log K}_{-\infty} e^{-i(u+i\rho)x}(K-e^x)dx = \frac{K^{\rho+1-iu}}{(\rho-iu)(\rho+1-iu)}, ~~~\rho>0.
\]

The down-and-in call option price ($c_{di}$) and up-and-in put option price ($p_{ui}$) are always in Case 2.
Therefore, we have their prices as
\[
c_{di} = \frac{e^{-rT}K^{1+\rho}}{(2\pi)^2 B^\rho}\int_{-\infty}^\infty \left(\frac{B}{K}\right)^{iu} \left(\frac{H(u)}{(\rho-iu)(1+\rho-iu)}\right)du
, ~~~\rho<-1
\]
and
\[
p_{ui} = \frac{e^{-rT}K^{1+\rho}}{(2\pi)^2 B^\rho}\int_{-\infty}^\infty \left(\frac{B}{K}\right)^{iu} \left(\frac{H(u)}{(\rho-iu)(1+\rho-iu)}\right)du
, ~~~\rho>0
\]

For the up-and-in call option price ($c_{ui}$) and the down-and-in put option price ($p_{di}$), we consider Case 1, and finally obtain  
\[
c_{ui} = \begin{cases}
\frac{e^{-rT}K^{1+\rho}}{(2\pi)^2 B^\rho}\int_{-\infty}^\infty \left(\frac{B}{K}\right)^{iu} \left(\frac{H(u)}{(\rho-iu)(1+\rho-iu)}\right)du & \text{ if } K\le B \\
\frac{e^{-rT}K^{1+\rho}}{2\pi S(0)^\rho}\int_{-\infty}^\infty \left(\frac{S(0)}{K}\right)^{iu} \left(\frac{\phi_{X(T-t)}(u+i\rho)}{(\rho-iu)(1+\rho-iu)}\right)du & \text{ if } K> B
\end{cases}
, ~~~\rho<-1.
\]
and
\[
p_{di} = \begin{cases}
\frac{e^{-rT}K^{1+\rho}}{(2\pi)^2 B^\rho}\int_{-\infty}^\infty \left(\frac{B}{K}\right)^{iu} \left(\frac{H(u)}{(\rho-iu)(1+\rho-iu)}\right)du & \text{ if } K\ge B \\
\frac{e^{-rT}K^{1+\rho}}{2\pi S(0)^\rho}\int_{-\infty}^\infty \left(\frac{S(0)}{K}\right)^{iu} \left(\frac{\phi_{X(T-t)}(u+i\rho)}{(\rho-iu)(1+\rho-iu)}\right)du & \text{ if } K< B
\end{cases}
, ~~~\rho>0.
\]

The up-and-out and down-and-out calls corresponding to the up-and-in and down-and-in calls above are priced by $c_{uo} = c - c_{ui}$, and $c_{do} = c - c_{di}$, respectively,
where $c$ is the vanilla call option price with  strike price $K$, and time to maturity $T$.
The up-and-out and down-and-out puts corresponding to the up-and-in and down-and-in puts above are priced by $p_{uo} = p - p_{ui}$, and $p_{do} = p - p_{di}$, respectively,
where $p$ is the vanilla put option price with  strike price $K$, and time to maturity $T$.

\end{document}